\newtheorem{Theorem}{Theorem}
\newtheorem{Lemma}{Lemma}
\newtheorem{Proposition}{Proposition}
\newtheorem{Example}{Example}
\newtheorem{Remark}{Remark}
\begin{document}

\title{Stopping Set Distributions of Some Linear Codes
\thanks{This research is supported in part by the National
Natural Science Foundation of China under the Grants 60972011, 60872025 and 10990011.
The material in this work was presented in part at the IEEE Information Theory Workshop, Chengdu, China, Oct. 2006.
}}

\author{
Yong Jiang, Shu-Tao Xia\thanks{Yong Jiang and Shu-Tao Xia are with
the Graduate School at Shenzhen of Tsinghua University, Shenzhen,
Guangdong 518055, P. R. China. E-mail: jiangy@sz.tsinghua.edu.cn,
xiast@sz.tsinghua.edu.cn} and Fang-Wei Fu\thanks{F.-W. Fu is with
the Chern Institute of Mathematics and the Key Laboratory of Pure
Mathematics and Combinatorics, Nankai University, Tianjin 300071,
P. R. China. Email: fwfu@nankai.edu.cn}}

\date{}
\maketitle

\begin{abstract}

Stopping sets and stopping set distribution of an low-density
parity-check code are used to determine the performance of this code
under iterative decoding over a binary erasure channel (BEC).
Let $C$ be a binary $[n,k]$ linear code with parity-check matrix $H$, where the rows of $H$ may be
dependent. A stopping set $S$ of $C$ with parity-check matrix $H$ is a subset of column indices of $H$ such
that the restriction of $H$ to $S$ does not contain a row of weight one.
The stopping set distribution $\{T_i(H)\}_{i=0}^n$ enumerates the number of
stopping sets with size $i$ of $C$ with parity-check matrix $H$. Note that stopping sets and stopping set distribution
are related to the parity-check matrix $H$ of $C$. Let $H^{*}$ be the parity-check matrix of $C$ which is
formed by all the non-zero codewords of its dual code $C^{\perp}$.
A parity-check matrix $H$ is called BEC-optimal if
$T_i(H)=T_i(H^*),\; i=0,1,\ldots, n$ and $H$ has the smallest
number of rows. On the BEC, iterative
decoder of $C$ with BEC-optimal parity-check matrix is an optimal decoder
with much lower decoding complexity than the exhaustive decoder. In this paper,
we study stopping sets, stopping set distributions and BEC-optimal parity-check matrices of
binary linear codes. Using finite geometry in combinatorics, we obtain BEC-optimal parity-check matrices and then determine the
stopping set distributions for the Simplex codes, the Hamming codes, the first order
Reed-Muller codes and the extended Hamming codes.
\end{abstract}

{\bf Keywords:}\quad Low-density parity-check (LDPC) codes, binary
erasure channel, iterative decoding, stopping sets, stopping set distribution, finite geometry.

\baselineskip=20pt

\section{Introduction}

It is well known that the performance of an low-density
parity-check (LDPC) code under iterative decoding over a binary
erasure channel (BEC) is completely determined by certain
combinatorial structures, called {\it stopping sets}, of the
parity-check matrix of the LDPC code \cite{dptru}\cite{sv}.
The weight distribution of a linear code plays an important role in
determining the performance of this linear code under maximum likelihood decoding
over a binary symmetric channel.
The so-called \emph{stopping set distribution} characterizes the performance of
an LDPC code under iterative decoding over BEC.
Stopping sets and stopping set distributions of linear codes have been studied recently
by a number of researchers, for examples, see \cite{aw}-\cite{koetter}, \cite{ks}-\cite{lm},
\cite{m}-\cite{sv} and \cite{w}-\cite{xf1}.

Let $C$ be a binary $[n,k,d]$ linear code with length $n$,
dimension $k$ and minimum distance $d$. Let $H$ be an $m\times n$
parity-check matrix of $C$, where the rows of $H$ may be
dependent. Let $I=\{1,2,\ldots, n\}$ and $J=\{1,2,\ldots, m\}$
denote the sets of column indices and row indices of $H$,
respectively. The \emph{Tanner graph} $G_H$ \cite{tanner} corresponding to $H$
is a bipartite graph comprising of $n$ variable nodes labelled by
the elements of $I$, $m$ check nodes labelled by the elements of $J$, and
the edge set $E\subseteq \{(i,j) : i\in I, j\in J\}$, where there
is an edge $(i,j)\in E$ if and only if $h_{ji}=1$. The
\emph{girth} $g$ of $G_H$, or briefly the girth of $H$, is defined
as the minimum length of circles in $G_H$. A {\it stopping set}
$S$ of $H$ is a subset of column indices $\{1,2,\ldots,n\}$ such
that the restriction of $H$ to $S$, say $H(S)$, does not contain a
row of weight one. The smallest size of a nonempty stopping set,
denoted by $s(H)$, is called the {\it stopping distance} of $C$.
The codewords with minimum weight $d$ are called the \emph{minimum
codewords} of $C$. Let $W(x)=\sum_{i=0}^n A_i x^i$ denote the
\emph{weight enumerator} of $C$, where $A_i$ is the number of
codewords with weight $i$. $\{A_i\}_{i=0}^n$ is called the
\emph{weight distribution} of $C$. The stopping sets with size
$s(H)$ are called the \emph{smallest stopping sets} of $H$. Let
$T^{(H)}(x)=\sum_{i=0}^n T_i(H) x^i$ denote the \emph{stopping set
enumerator} of $C$ with parity-check matrix $H$, where $T_i(H)$ is
the number of stopping sets of $H$ with size $i$. Note that
$\emptyset$ is defined as a stopping set and $T_0(H)=1$.
$\{T_i(H)\}_{i=0}^n$ is called the \emph{stopping set
distribution} (SSD) of $C$ with parity-check matrix $H$.
Note that the stopping sets and stopping set distribution
dependent on the choice of the parity-check matrix $H$ of $C$.

Schwartz and Vardy \cite{sv} defined the \emph{stopping
redundancy} of the binary linear code $C$ as the minimum number of
rows of $H$ such that $s(H)=d$. Etzion \cite{e} studied the
stopping redundancy of Reed-Muller codes. In particular, the
stopping redundancies are determined respectively for the Hamming
codes \cite{sv}, the Simplex codes and the extended Hamming codes
\cite{e}, and an upper bound on the stopping redundancy of the
first order Reed-Muller codes was obtained in \cite{e}. In this
paper, we study a similar concept of the binary linear code $C$,
BEC-optimal parity-check matrix, in which both the number of stopping sets and the number of rows
are minimal among all parity-check matrices of $C$.

Suppose a codeword $\mathbf{c}=(c_1,c_2,\ldots,c_n)\in C$ is
transmitted over the BEC. Let
$\mathbf{r}=(r_1,r_2,\ldots,r_n)$ be the received word. The
erasure set is defined by $E_{\mathbf{r}}=\{ j: r_j\ne 0,1\}$. An
\emph{incorrigible set} of $C$ is an erasure set which contains
the support of a non-zero codeword of $C$. As noted by Weber and Abdel-Ghaffar
in \cite{wa2}, the received word $\mathbf{r}$ can be decoded unambiguously if and only if it
matches exactly one codeword of $C$ on all its nonerased positions.
This is equivalent to the condition that the
erasure set $E_{\mathbf{r}}$ is not an incorrigible set since $C$ is a linear code.
A decoder is said to be \emph{optimal} for the
BEC if it can achieve unambiguous decoding whenever the
erasure set is not incorrigible.
Note that an exhaustive decoder searching the complete set of codewords
is optimal. Let $H^{*}$ be formed by rows
which are all the non-zero codewords of the dual code $C^{\perp}$,
and denote its stopping set enumerator by $T^*(x)=\sum_{i=0}^n
T_i^* x^i$. The iterative decoder with parity-check matrix $H^{*}$
achieves the best possible performance, but has the highest decoding complexity.
It is also known from \cite{wa2} and \cite{ht1} that the
iterative decoder with parity-check matrix $H^{*}$ is an optimal
decoder for the BEC. For fixed parity-check matrix $H$, since $H$ is a
sub-matrix formed by some rows of $H^*$, any stopping set of $H^*$
is a stopping set of $H$, but the converse proposition may not be
true in general. Hence, we have $T_i(H)\geq T_i^*$ for every $0\leq i\leq n$.
A parity-check matrix $H$ is called
\emph{BEC-optimal} if $T^{(H)}(x)=T^*(x)$ and $H$ has the smallest
number of rows. Since a BEC-optimal parity-check matrix has the
same SSD with $H^*$, the iterative decoder with BEC-optimal
parity-check matrix must be an optimal decoder and it has lower
decoding complexity than $H^*$. Moreover, it achieves the best possible performance
as the iterative decoder with parity-check matrix $H^{*}$.

For the binary $[2^m-1, 2^m-m-1,3]$ Hamming code, say
$\mathcal{H}(m)$, it is known from \cite{sv} that for any
parity-check matrix, the stopping distance is equal to the minimum
distance. In the 2004 Shannon lecture, McEliece \cite{m} gave an
exact expression for the number of smallest stopping sets of
$\mathcal{H}(m)$ with the full rank parity-check matrix $F$, i.e.,
\begin{eqnarray}
\label{i1} T_3(F)=\frac{1}{6} (5^m-3^{m+1}+2^{m+1}).
\end{eqnarray}
Recently, Abdel-Ghaffar and Weber \cite{aw} further determined the
whole SSD of $\mathcal{H}(m)$ with the parity-check matrix $F$.
From \cite{ms} we know that
\begin{eqnarray}
\label{i2} A_3=\frac{1}{3} (2^m-1)(2^{m-1}-1)
\end{eqnarray}
and $A_3< T_3(F)$, i.e., $F$ is not BEC-optimal. Weber and
Abdel-Ghaffar \cite{wa} showed that for the parity-check matrix
$H^{*}$, $T_3(H^{*})=A_3$ and $T_4(H^{*})=A_4$, but they did not
determine the whole SSD of $H^{*}$. In this paper, we obtain
BEC-optimal parity-check matrices and then determine their SSDs
for the Simplex codes, the Hamming codes, the first order
Reed-Muller codes and the extended Hamming codes by using finite
geometry theory. Moreover, the above BEC-optimal parity-check
matrices are unique up to the equivalence. The rest of this paper is
arranged as follows. In Section II, we give some notations and results
in combinatorics that are needed in this paper. In Section III,
we obtain the BEC-optimal matrices for the Simplex codes, the Hamming codes, the first order
Reed-Muller codes and the extended Hamming codes. In
Section IV, in order to determine the SSDs for these BEC-optimal parity-check matrices,
the stopping generators of finite geometries are introduced.
In Section V, we determine the SSDs for the corresponding BEC-optimal parity-check matrices of
these codes. Finally, some conclusions are given in Section VI.

\section{Preliminaries}

In this section, we introduce some notations and results of finite geometry and
Gaussian binomial coefficients that will be used in this paper.
\subsection{Finite Geometries}

Let $\mathbb{F}_q$ be a finite field of $q$ elements and
$\mathbb{F}_q^m$ be the $m$-dimensional vector space over
$\mathbb{F}_q$, where $m\ge 2$.

Let $EG(m,q)$ be the $m$-dimensional Euclidean geometry over
$\mathbb{F}_q$ \cite[pp. 692-702]{ms}. $EG(m,q)$ has $q^m$ points,
which are vectors of $\mathbb{F}_q$. The $\mu$-flat in $EG(m,q)$
is a $\mu$-dimensional subspace of $\mathbb{F}_q^m$ or its coset.
A point is a $0$-flat, a \emph{line} is a $1$-flat, a \emph{plane}
is a $2$-flat, and an $(m-1)$-flat is called a \emph{hyperplane}.

Let $PG(m,q)$ be the $m$-dimensional projective geometry over
$\mathbb{F}_q$ \cite[pp. 692-702]{ms}. $PG(m,q)$ is defined in
$\mathbb{F}_q^{m+1}\setminus\{\mathbf{0}\}$. Two nonzero vectors
$\mathbf{p,p'}\in \mathbb{F}_q^{m+1}$ are said to be equivalent if
there is $\lambda\in \mathbb{F}_q$ such that $\mathbf{p}=\lambda
\mathbf{p'}$. It is well known that all equivalence classes of
$\mathbb{F}_q^{m+1}\setminus\{\mathbf{0}\}$ form points of
$PG(m,q)$. $PG(m,q)$ has $(q^{m+1}-1)/(q-1)$ points. The
$\mu$-flat in $PG(m,q)$ is simply the set of equivalence classes contained
in a $({\mu}+1)$-dimensional subspace of $\mathbb{F}_q^{m+1}$.
$0$-flat, $1$-flat, and $(m-1)$-flat are also called point, line
and hyperplane respectively.

In this paper, in order to present a unified approach, we use
$FG(m,q)$ to denote either $EG(m,q)$ or $PG(m,q)$. Let $n$ denote
the number of points of $FG(m,q)$. All points of $FG(m,q)$ are
indexed from $1$ to $n$. We will use $i$ to denote the $i$-th
point of $FG(m,q)$ for convenience if there is no confusion. For
any two different points $i,i'\in FG(m,q)$, there is one and only
one line, say $L(i,i')$, passing through them; for any three
distinct points $i,i',i''\in FG(m,q)$ which are not collinear,
there is one and only one plane, say $M(i,i',i'')$, passing
through them. For a set of points $\Pi\subseteq FG(m,q)$, let
$\chi(\Pi)=(x_1,x_2,\ldots, x_n)$ denote the \emph{incidence
vector} of $\Pi$, i.e., $x_i$=1 if $i\in \Pi$ and $x_i=0$
otherwise. For $u>0$, a \emph{$u$-set} means a set of $u$ points
of $FG(m,q)$. For a non-empty subset $S$ of $FG(m,q)$, define
$\langle S\rangle$ as the flat generated by the points in $S$,
i.e., $\langle S\rangle$ is the flat containing $S$ with the
minimum dimension. Clearly, $\langle S\rangle$ solely exists and
for any flat $F\supseteq S$, $\langle S\rangle\subseteq F$. The
next lemma is obvious.
\begin{Lemma}
\label{lem1} Let $\Pi$ be a non-empty subset of $FG(m,q)$. Then
$\Pi$ is a flat if and only if $\langle S \rangle\subseteq \Pi$
for any non-empty $S\subseteq \Pi$. Moreover,\\
{\rm(i)}. Let $\Pi\subseteq PG(m-1,2)$ and $|\Pi| \ge 2$. Then
$\Pi$ is a flat if and only if $L(i,i')\subseteq \Pi$ for any two
different points $i,i'\in
\Pi$;\\
{\rm(ii)}. Let $\Pi\subseteq EG(m,2)$ and $|\Pi|\ge 3$. Then $\Pi$
is a flat if and only if $M(i,i',i'')\subseteq \Pi$ for any three
distinct points $i,i',i''\in \Pi$.
\end{Lemma}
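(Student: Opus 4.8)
The statement to prove is Lemma~\ref{lem1}, which characterizes flats in $FG(m,q)$ via closure under the generation operation $\langle\cdot\rangle$, together with the two specialized characterizations over $\mathbb{F}_2$ using lines (in $PG$) and planes (in $EG$).

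\medskip

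The plan is to prove the main equivalence first and then derive the two binary specializations as corollaries. For the forward direction, suppose $\Pi$ is a flat. By definition $\langle S\rangle$ is the smallest flat containing $S$, so if $S\subseteq\Pi$ and $\Pi$ is a flat, then $\langle S\rangle\subseteq\Pi$ immediately; this direction is essentially a tautology from the definition of $\langle\cdot\rangle$. For the converse, suppose $\langle S\rangle\subseteq\Pi$ for every non-empty $S\subseteq\Pi$. Taking $S=\Pi$ itself gives $\langle\Pi\rangle\subseteq\Pi$, and since trivially $\Pi\subseteq\langle\Pi\rangle$, we get $\Pi=\langle\Pi\rangle$, which is a flat. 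So the first sentence of the lemma is nearly immediate once one unwinds the definitions.

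\medskip

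For part (i), over $PG(m-1,2)$, I would argue as follows. If $\Pi$ is a flat with $|\Pi|\ge 2$, then for any two distinct $i,i'\in\Pi$ the line $L(i,i')=\langle\{i,i'\}\rangle$ is contained in $\Pi$ by the main equivalence. Conversely, suppose $L(i,i')\subseteq\Pi$ for all distinct $i,i'\in\Pi$; I must show $\langle S\rangle\subseteq\Pi$ for every non-empty $S\subseteq\Pi$, then invoke the main equivalence. The natural route is induction on $|S|$: for $|S|\le 2$ this is the hypothesis (the case $|S|=1$ being trivial since a point is its own span). For the inductive step, write $S=S'\cup\{i\}$; one wants to express $\langle S\rangle$ as a union of lines joining $i$ to points of $\langle S'\rangle$. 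Here the key fact, special to $q=2$, is that in a projective space over $\mathbb{F}_2$ a line has exactly three points, and the span $\langle A\cup\{i\}\rangle$ of a flat $A$ together with an outside point $i$ is precisely $A\cup\{i\}\cup\{L(i,a):a\in A\}$ — every point of the larger flat lies on a line through $i$ and a point of $A$. Combining this with the inductive hypothesis (which gives $\langle S'\rangle\subseteq\Pi$) and the line hypothesis yields $\langle S\rangle\subseteq\Pi$. Part (ii) for $EG(m,2)$ is analogous but one level up: a plane over $\mathbb{F}_2$ has exactly four points, and the same inductive scheme works with planes $M(i,i',i'')$ in place of lines, using that $\langle A\cup\{i,i'\}\rangle$ for a flat $A$ is built from planes through $i,i'$ and points of $A$; alternatively one can bootstrap from part (i) after observing that $EG(m,2)$ planes contain the relevant $PG$-style structure, but the direct induction is cleanest.

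\medskip

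The main obstacle, such as it is, is making the inductive step in parts (i) and (ii) precise: one must verify the structural claim that over $\mathbb{F}_2$ the flat generated by a smaller flat plus one (resp.\ two) extra point(s) decomposes exactly into the lines (resp.\ planes) through the new point(s) and the old flat. This is a standard counting/dimension argument — the generated flat has dimension one (resp.\ two) more, hence twice (resp.\ four times) as many points, and each new point pairs off with an old one along a line (resp.\ triple along a plane) — but it is where the hypothesis $q=2$ is genuinely used and deserves a careful line or two. Everything else is bookkeeping with the definition of $\langle\cdot\rangle$.
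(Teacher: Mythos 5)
Your proposal is correct: the main equivalence follows by unwinding the definition of $\langle\cdot\rangle$ exactly as you say, and the inductive argument for (i) and (ii), resting on the $\mathbb{F}_2$ fact that $\langle A\cup\{i\}\rangle$ is covered by the lines $L(i,a)$, $a\in A$ (resp.\ the planes through $i$ and pairs of points of $A$), is sound, with only routine edge cases ($i\in\langle S'\rangle$, or $\langle S'\rangle$ too small to form the needed line/plane) left implicit. The paper itself gives no proof — it simply declares the lemma obvious — so your argument is precisely the standard verification that the authors chose to omit.
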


For $0\le \mu_1<\mu_2\le m$, there are $N(\mu_2,\mu_1)$
$\;\mu_1$-flats contained in a given $\mu_2$-flat and
$A(\mu_2,\mu_1)$ $\;\mu_2$-flats containing a given $\mu_1$-flat,
where for $EG(m,q)$ and $PG(m,q)$ respectively (see \cite{txla05})
\begin{eqnarray}
\label{fg1}
N_{EG}(\mu_2,\mu_1)&=& q^{\mu_2-\mu_1} \prod_{i=1}^{\mu_1} \frac{q^{\mu_2-i+1}-1}{q^{\mu_1-i+1}-1},\\
\label{fg2}
N_{PG}(\mu_2,\mu_1)&=& \prod_{i=0}^{\mu_1} \frac{q^{\mu_2-i+1}-1}{q^{\mu_1-i+1}-1},
\end{eqnarray}
\vspace{-0.5cm}
\begin{eqnarray}
\label{fg3}
A_{EG}(\mu_2,\mu_1)=A_{PG}(\mu_2,\mu_1)= \prod_{i=\mu_1+1}^{\mu_2}
\frac{q^{m-i+1}-1}{q^{\mu_2-i+1}-1}.
\end{eqnarray}
For $1\le \mu\le m$, let $n=N(m,0)$ and $J=N(m,\mu)$ be the
numbers of points and $\mu$-flats in $FG(m,q)$ respectively. The
points and $\mu$-flats are indexed from $1$ to $n$ and $1$ to $J$
respectively. Let
\begin{eqnarray}
\label{h} H=H_{FG}(m,\mu)=(h_{ji})_{J\times n}
\end{eqnarray}
be the \emph{point-$\mu$-flat incidence matrix}, where $h_{ji}=1$
for $1\le j\le J$ and $1\le i \le n$ if and only if the $j$th
$\mu$-flat contains the $i$th point. The rows of $H$ correspond to
all the $\mu$-flats in $FG(m,q)$ and have the same weight
$N(\mu,0)$. The columns of $H$ correspond to all the points and
have the same weight $A(\mu,0)$. The binary linear code with the
parity-check matrix $H$ is a class of LDPC codes based on finite geometries
\cite{txla05}\cite{klf}\cite{xf2}, denoted by $C_{FG}(m,\mu)$. Clearly, the
girth of $H$ is 6 if $\mu=1$ and 4 otherwise \cite{txla05}. Xia
and Fu \cite{xf2} proved that
\begin{eqnarray}
\label{fg4} d&\ge& s(H)\ge
A(\mu,\mu-1)+1=\frac{q^{m-\mu+1}-1}{q-1}+1.
\end{eqnarray}
Clearly, for $q=2$ and $2\le \mu\le m$, $C_{EG}(m,\mu)$ is the
$(\mu-1)$-th order Reed-Muller code $RM(m,\mu-1)$
\cite{ms}\cite{txla05}. Since the minimum distance of
$RM(m,\mu-1)$ is $2^{m-\mu+1}$, by (\ref{fg4}), the stopping
distance is equal to the minimum distance.

\subsection{Gaussian binomial coefficients}

For non-negative integers $m\le n$, let
\begin{eqnarray}
\label{gauss} \left[n\atop m\right]_q &=& \prod_{i=0}^{m-1}
\frac{q^{n-i}-1}{q^{m-i}-1}
\end{eqnarray}
denote the $q$-binomial coefficient or Gaussian binomial
coefficient \cite[pp.443-444]{ms}. In this paper, we will omit the
subscript $q$ when $q=2$. It is easy to check that
\begin{eqnarray}
\label{gauss1} \left[n\atop 0\right]_q = \left[n\atop
n\right]_q=1,\quad \left[n\atop m\right]_q = \left[n\atop
n-m\right]_q,\\
\label{gauss2} \left[n\atop m\right]_q \left[m\atop r\right]_q=
\left[n\atop r\right]_q \left[n-r\atop m-r\right]_q.
\end{eqnarray}
The well-known \emph{Cauchy Binomial Theorem} states that
\begin{eqnarray}
\label{cauchy} \prod_{i=1}^m (1+q^i x) &=& \sum_{i=0}^m
\left[m\atop i\right]_q q^{i(i+1)/2} x^i.
\end{eqnarray}

From now on, we will always assume that $q=2$. As usual, we define
${0\choose 0}=1$, ${i_2\choose i_1}=0$, $\left[0\atop 0\right]
=1$, $\left[i_2\atop i_1\right] =0$, $\sum_{i=i_1}^{i_2} a_i =0$
and $\prod_{i=i_1}^{i_2} a_i =1$ if $i_1>i_2$. Letting $x=-1/2$ in
(\ref{cauchy}), we have that
\begin{eqnarray}
\label{cauchy2} \sum_{i=0}^m \left[m\atop i\right] 2^{i(i-1)/2}
(-1)^i  &=& \delta_{m,0},
\end{eqnarray}
where $\delta_{m,n}=1$ if $m=n$ and $\delta_{m,n}=0$ otherwise.
It is easy to check by (\ref{fg1})-(\ref{fg3}) and
(\ref{gauss})-(\ref{gauss2}) that
\begin{eqnarray}
\label{Ngauss}N_{PG}(\mu_2,\mu_1)&=&\left[\mu_2+1\atop
\mu_1+1\right], \\
\label{Ngauss1}
N_{EG}(\mu_2,\mu_1)&=&2^{\mu_2-\mu_1}\left[\mu_2\atop \mu_1\right],\\
\label{Ngauss2}
A(\mu_2,\mu_1)&=&\left[m-\mu_1\atop \mu_2-\mu_1\right],\\
\label{Ngauss3} N(l,l-j) N(l-j,k) &=&\left[l-k\atop
j\right]N(l,k).
\end{eqnarray}

\section{BEC Optimal Parity-Check Matrices}

In this section, using finite geometry theory,
we obtain the BEC-optimal matrices for the Simplex codes,
the Hamming codes, the first order Reed-Muller codes and the extended Hamming codes.

The points of $PG(m-1,2)$ are simply the nonzero vectors of
$\mathbb{F}_2^m$. A $\mu$-flat of $PG(m-1,2)$ is simply the
nonzero linear combination of $\mu+1$ linearly independent points.
By (\ref{fg2}) and (\ref{fg3}), $PG(m-1,2)$ has $2^m-1$ points,
${(2^m-1)(2^{m-1}-1)}/{3}$ lines and $2^m-1$ hyperplanes.
Moreover, every line contains three points.

The points of $EG(m,2)$ are simply the vectors of $\mathbb{F}_2^m$.
A $\mu$-flat of $EG(m,2)$ is simply a $\mu$-dimensional subspace or
its coset. By (\ref{fg1}) and (\ref{fg3}), $EG(m,2)$ has $2^m$
points, $2^{m-1}(2^m-1)$ lines, $2^{m-2}(2^m-1)(2^{m-1}-1)/3$
planes and $2^{m+1}-2$ hyperplanes. Moreover, every line contains
two points, every plane contains 4 points.

Let $RM(m,r)$ be the $r$-th order binary Reed-Muller code
\cite[Ch. 13]{ms}. By puncturing a fixed coordinate from all
codewords of $RM(m,r)$, we obtain the punctured Reed-Muller code
$RM(m,r)^*$.
\begin{Lemma}
\label{lem2}{\rm \cite[p. 381, Th. 10]{ms}} The incidence vectors
of all the $(m-r-1)$-flats of $PG(m-1,2)$ generate $RM(m, r)^*$.
\end{Lemma}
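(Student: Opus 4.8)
My plan is to prove the two inclusions $B\subseteq RM(m,r)^*$ and $RM(m,r)^*\subseteq B$ separately, where $B$ denotes the binary code generated by the incidence vectors of the $(m-r-1)$-flats of $PG(m-1,2)$. (The statement is meant for $0\le r\le m-1$; for $r=m$ the $(-1)$-flats are empty and generate only $\mathbf{0}$, whereas $RM(m,m)^*$ is the whole space, and only $r<m$ is used in this paper.) Since the coordinates of $RM(m,r)$ are equivalent under its affine automorphism group, it is enough to treat the case where the deleted coordinate is the one indexed by $\mathbf{0}\in\mathbb{F}_2^m$; the remaining $2^m-1$ coordinates are then indexed by the nonzero vectors of $\mathbb{F}_2^m$, i.e. by the points of $PG(m-1,2)$, and the $(m-r-1)$-flats of $PG(m-1,2)$ are exactly the sets $U\setminus\{\mathbf{0}\}$ with $U\le\mathbb{F}_2^m$ and $\dim U=m-r$. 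For such a $U$ I write $\chi^*(U)$ for the incidence vector of $U\setminus\{\mathbf{0}\}$ on this index set, so $B$ is spanned by the vectors $\chi^*(U)$ with $\dim U=m-r$.

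\emph{Inclusion $B\subseteq RM(m,r)^*$.} Given $U$ with $\dim U=m-r$, choose linear forms $\ell_1,\dots,\ell_r$ on $\mathbb{F}_2^m$ with $U=\bigcap_{i=1}^r\ker\ell_i$. The indicator of $U$ on all of $\mathbb{F}_2^m$ then equals the Boolean function $\prod_{i=1}^r(1+\ell_i(x))$, which has degree at most $r$ and hence lies in $RM(m,r)$; deleting the coordinate $\mathbf{0}$ gives $\chi^*(U)\in RM(m,r)^*$.

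\emph{Inclusion $RM(m,r)^*\subseteq B$.} Here I would use two elementary identities. (a) \emph{Telescoping over hyperplanes}: if $\dim U=d\ge 2$, every nonzero vector of $U$ lies in exactly $2^{d-1}-1$ of the $2^d-1$ hyperplanes of $U$, an odd number, so over $\mathbb{F}_2$ one has $\chi^*(U)=\sum_{H}\chi^*(H)$, the sum running over the hyperplanes $H$ of $U$; applying this repeatedly (only to subspaces of dimension between $m-r+1\ge 2$ and $\dim U$) gives $\chi^*(U)\in B$ for every $U$ with $\dim U\ge m-r$. (b) \emph{Splitting a coset}: if $W\le\mathbb{F}_2^m$ and $v\notin W$, then $v+W=(W\oplus\langle v\rangle)\setminus W$ and $\mathbf{0}\notin v+W$, so $\chi^*(v+W)=\chi^*(W\oplus\langle v\rangle)+\chi^*(W)$. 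Now $RM(m,r)^*$ is spanned by the deleted evaluations of the multilinear monomials $x_{i_1}\cdots x_{i_s}$ with $0\le s\le r$. For $s\ge 1$ such an evaluation is the indicator of the coset $v+W$ with $W=\{x:x_{i_1}=\dots=x_{i_s}=0\}$, $\dim W=m-s$, and $v\notin W$; this coset misses $\mathbf{0}$, so by (b) its deleted evaluation is $\chi^*(W\oplus\langle v\rangle)+\chi^*(W)$, and since $\dim W=m-s\ge m-r$ and $\dim(W\oplus\langle v\rangle)=m-s+1\ge m-r$, both summands lie in $B$ by (a). For $s=0$ the evaluation is the all-ones vector $\chi^*(\mathbb{F}_2^m)$, again in $B$ by (a). Hence $RM(m,r)^*\subseteq B$, and together with the first inclusion $B=RM(m,r)^*$.

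The one genuinely delicate point is the parity count in (a): a fixed point of a $d$-dimensional binary space lies in $2^{d-1}-1$ of its hyperplanes, which is odd precisely when $d\ge 2$, and this is exactly what makes the $\mathbb{F}_2$-sum of hyperplane incidence vectors collapse onto $U\setminus\{\mathbf{0}\}$ and nothing more; everything else is routine bookkeeping. Alternatively one can simply invoke \cite[Ch.~13]{ms}, where the incidence-vector descriptions of $RM(m,r)$ and $RM(m,r)^*$ are worked out in detail.
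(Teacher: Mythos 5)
Your proposal is correct. Note, though, that the paper does not prove this lemma at all: it is quoted verbatim from MacWilliams and Sloane \cite[p.~381, Th.~10]{ms}, so there is no internal argument to compare against -- what you have done is supply a self-contained proof of the cited textbook fact. Your argument is sound on all the points that matter: the identification of $(m-r-1)$-flats of $PG(m-1,2)$ with $(m-r)$-dimensional subspaces $U\le\mathbb{F}_2^m$ minus $\mathbf{0}$ matches the paper's conventions; the inclusion $B\subseteq RM(m,r)^*$ via the degree-$\le r$ indicator $\prod_{i=1}^r\bigl(1+\ell_i(x)\bigr)$ is exactly right; the parity count that a nonzero vector of a $d$-dimensional space lies in $2^{d-1}-1$ of its hyperplanes (odd iff $d\ge 2$) correctly drives the telescoping $\chi^*(U)=\sum_H\chi^*(H)$, and you are careful to apply it only in dimensions $\ge m-r+1\ge 2$; and the coset-splitting identity $\chi^*(v+W)=\chi^*(W\oplus\langle v\rangle)+\chi^*(W)$ reduces the punctured monomial evaluations to subspace indicators of dimension at least $m-r$, which (a) already covers. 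The restriction $r\le m-1$, which you flag, is indeed needed and is all that the paper uses ($r=m-2$ and $r=1$). In short: where the paper buys the result by citation, you rebuild it from two elementary counting identities; this is a legitimate and complete alternative, essentially the standard textbook derivation made explicit.
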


\begin{Lemma}
\label{lemeg3}{\rm \cite[p. 385, Th. 12]{ms}} The incidence
vectors of all the $(m-r)$-flats of $EG(m,2)$ generate $RM(m, r)$.
\end{Lemma}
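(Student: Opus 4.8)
The plan is to prove the two inclusions separately: (a) the incidence vector of every $(m-r)$-flat of $EG(m,2)$ belongs to $RM(m,r)$, and (b) these incidence vectors span $RM(m,r)$.

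For (a), I would use the standard description of a flat as an affine solution set. An $(m-r)$-flat $F$ is a coset of an $(m-r)$-dimensional subspace of $\mathbb{F}_2^m$, hence $F=\{x\in\mathbb{F}_2^m : \ell_j(x)=b_j,\ 1\le j\le r\}$ for linearly independent linear forms $\ell_1,\dots,\ell_r$ and constants $b_j\in\mathbb{F}_2$. Over $\mathbb{F}_2$ the condition $\ell_j(x)=b_j$ is the same as $1+\ell_j(x)+b_j=1$, so the incidence vector $\chi(F)$ is precisely the evaluation vector of the Boolean polynomial $f(x)=\prod_{j=1}^{r}(1+\ell_j(x)+b_j)$, which has degree at most $r$. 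Hence $\chi(F)\in RM(m,r)$. Alternatively one can argue that $\chi(F)$ is orthogonal to every evaluation vector of a polynomial of degree less than $m-r$, since the sum of such a polynomial over an affine subspace of dimension $m-r$ vanishes, and then invoke $RM(m,r)=RM(m,m-r-1)^{\perp}$.

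For (b), recall that $RM(m,r)$ is spanned over $\mathbb{F}_2$ by the evaluation vectors of the monomials $x_S=\prod_{i\in S}x_i$ with $|S|\le r$. Fix such an $S$ with $|S|=s$. The evaluation vector of $x_S$ equals $\chi(F_S)$, where $F_S=\{x : x_i=1 \text{ for all } i\in S\}$ is an $(m-s)$-flat. If $s=r$ we are done. If $s<r$, pick an $(m-r)$-dimensional subspace $W$ contained in the $(m-s)$-dimensional subspace $\{x : x_i=0 \text{ for all } i\in S\}$; then $F_S$ decomposes into $2^{r-s}$ pairwise disjoint cosets of $W$, each an $(m-r)$-flat, so $\chi(F_S)$ is the $\mathbb{F}_2$-sum of their incidence vectors and therefore lies in the span of the incidence vectors of the $(m-r)$-flats. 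Thus every generator of $RM(m,r)$, and hence $RM(m,r)$ itself, lies in that span. Together with (a) this gives equality.

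I do not expect a genuine obstacle here: the only points needing care are the identification of a flat with an affine solution set (so that its indicator is visibly a low-degree polynomial) and the partition-into-cosets step that reduces higher-dimensional flats to $(m-r)$-flats. Both are routine facts about $EG(m,2)$ and the polynomial model of Reed--Muller codes; the same scheme, applied in $PG(m-1,2)$ with the projective analogue of the partition argument, yields Lemma~\ref{lem2}.
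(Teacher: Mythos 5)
Your argument is correct, and it is essentially the classical proof: the paper itself offers no proof of this lemma, quoting it directly from MacWilliams--Sloane (Ch.~13, Th.~12), and your two inclusions --- writing $\chi(F)$ as the evaluation of $\prod_{j=1}^{r}\bigl(1+\ell_j(x)+b_j\bigr)$, a polynomial of degree at most $r$, and conversely partitioning the $(m-s)$-flat supporting each monomial $x_S$ ($|S|=s\le r$) into $2^{r-s}$ disjoint $(m-r)$-flats whose indicators sum (with no cancellation, since the supports are disjoint) to $\chi(F_S)$ --- are exactly the standard textbook route. The only detail worth stating explicitly is the existence of the $(m-r)$-dimensional subspace $W$ inside $\{x: x_i=0,\ i\in S\}$, which is immediate since $m-s\ge m-r$, so there is no gap.
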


It is well known that $RM(m,m-2)$ is the binary $[2^m,2^m-m-1,4]$ extended
Hamming code, which is also denoted by $\hat\mathcal{H}(m)$;
$RM(m,1)$ is the dual code of $\hat\mathcal{H}(m)$ and a binary $[2^m,m+1,2^{m-1}]$ linear code;
$RM(m,m-2)^*$ is the binary $[2^m-1,2^m-m-1,3]$ Hamming code, which is denoted by
$\mathcal{H}(m)$; the shortened $RM(m,1)$, or the Simplex code
$\mathcal{S}(m)$, is the dual code of $\mathcal{H}(m)$ and a binary $[2^m-1,m,2^{m-1}]$ linear code.

In $PG(m-1,2)$, by (\ref{h}), let
\begin{eqnarray}
\label{h1} H^{(1)}=H_{PG}(m-1,1)
\end{eqnarray}
be the ${(2^m-1)(2^{m-1}-1)}/{3}\times (2^m-1)$ point-line
incidence matrix. Clearly, $H^{(1)}$ has uniform row weight $3$
and uniform column weight $2^{m-1}-1$ and girth 6. By (\ref{h}),
let
\begin{eqnarray}
\label{h2} H^{(2)}=H_{PG}(m-1,m-2)+J,
\end{eqnarray}
where $H_{PG}(m-1,m-2)$ is the $(2^m-1)\times (2^m-1)$
point-hyperplane incidence matrix and $J$ is a $(2^m-1)\times
(2^m-1)$ all-1 matrix. It is obvious that for any hyperplane $P$,
the incidence vector of $\bar P=PG(m-1,2)\setminus P$ is a row of
$H^{(2)}$ and vice versa. Clearly, $H^{(2)}$ has uniform row
weight $2^{m-1}$, uniform column weight $2^{m-1}$ and girth 4.
\begin{Lemma}
\label{lem3} $H^{(1)}$ is a parity-check matrix of
$\mathcal{S}(m)$ and the rows of $H^{(1)}$ form all minimum
codewords of $\mathcal{H}(m)$. $H^{(2)}$ is a parity-check matrix
of $\mathcal{H}(m)$ and the rows of $H^{(2)}$ form all nonzero
codewords of $\mathcal{S}(m)$.
\end{Lemma}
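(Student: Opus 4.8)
The plan is to establish the four claims of Lemma~\ref{lem3} by combining the Reed--Muller identification lemmas (Lemmas~\ref{lem2} and~\ref{lemeg3}) with the duality relationships among $\mathcal{S}(m)$, $\mathcal{H}(m)$, and the first-order Reed--Muller code recorded just before the statement. The key observation is that a parity-check matrix of a code $C$ is exactly a matrix whose rows lie in $C^\perp$ and whose row space equals $C^\perp$; equivalently, its rows generate $C^\perp$.

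First I would handle $H^{(1)}$. By Lemma~\ref{lem2} with $r=m-2$, the incidence vectors of all $1$-flats (lines) of $PG(m-1,2)$ generate $RM(m,m-2)^* = \mathcal{H}(m)$. Since $H^{(1)} = H_{PG}(m-1,1)$ is precisely the matrix whose rows are these line incidence vectors, its row space is $\mathcal{H}(m) = \mathcal{S}(m)^\perp$, so $H^{(1)}$ is a parity-check matrix of $\mathcal{S}(m)$. For the second assertion about $H^{(1)}$: every line of $PG(m-1,2)$ has exactly three points, so each row of $H^{(1)}$ has weight $3$, which is the minimum distance of $\mathcal{H}(m)$; conversely each weight-$3$ codeword of $\mathcal{H}(m)$ has support of size $3$, and a $3$-subset of $PG(m-1,2)$ is a line iff the three points are linearly dependent in $\mathbb{F}_2^m$ (equivalently, iff their sum is zero, since in $\mathbb{F}_2$ a dependent triple of distinct nonzero vectors sums to $0$). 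One then checks that the weight-$3$ codewords of $\mathcal{H}(m)$ correspond exactly to such dependent triples, either by counting --- using $A_3 = \frac{1}{3}(2^m-1)(2^{m-1}-1)$ from~(\ref{i2}), which matches the number of lines of $PG(m-1,2)$ --- or directly: a columnwise argument on the standard parity-check matrix shows a set of columns sums to zero iff the corresponding points are collinear. So the rows of $H^{(1)}$ are exactly the minimum codewords of $\mathcal{H}(m)$.

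Next I would treat $H^{(2)}$. The rows of $H^{(2)} = H_{PG}(m-1,m-2) + J$ are the incidence vectors of the complements $\bar P = PG(m-1,2)\setminus P$ of hyperplanes $P$. By Lemma~\ref{lem2} with $r=1$, the incidence vectors of the $(m-2)$-flats (hyperplanes) generate $RM(m,1)^* = \mathcal{S}(m)$. Each hyperplane of $PG(m-1,2)$ has $2^{m-1}-1$ points, and a standard fact (the punctured first-order RM code has all nonzero codewords of weight $2^{m-1}-1$, since the Simplex code is a constant-weight code) shows the hyperplane incidence vectors are \emph{all} the nonzero codewords of $\mathcal{S}(m)$; since there are $2^m-1$ hyperplanes and $2^m-1$ nonzero codewords of $\mathcal{S}(m)$, this is a bijection. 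Now $\bar P$ has $2^{m-1}$ points, so each row of $H^{(2)}$ has weight $2^{m-1}$; these are the complements of nonzero Simplex codewords. I claim these complements are exactly the nonzero codewords of $\mathcal{H}(m)$'s \emph{dual}... more directly: $\mathcal{S}(m)$ and its cosets. Actually the cleanest route: the row space of $H^{(2)}$ is spanned by $\{\chi(\bar P)\} = \{\mathbf{1} + \chi(P)\}$, which together with the fact that $\mathbf{1}$ is itself a sum of two such vectors (or lies in the span) shows the row space equals $\langle \mathbf{1}\rangle + \mathcal{S}(m)$. One checks this is $\mathcal{H}(m)$: since $\mathcal{S}(m) \subset \mathcal{H}(m)$ has codimension... no --- rather, recall $\mathcal{H}(m)^\perp = \mathcal{S}(m)$ has dimension $m$, so $\mathcal{H}(m)$ has dimension $2^m - 1 - m$, while $\mathcal{S}(m)$ has dimension $m$; these only coincide for small $m$. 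The correct statement is that the $\bar P$ vectors span a code whose dual is $\mathcal{H}(m)$, equivalently they span $\mathcal{H}(m)^\perp = \mathcal{S}(m)$ again (note $\chi(\bar P) = \mathbf 1 + \chi(P)$ and $\mathbf 1 \in \mathcal S(m)$ since $\mathbf 1$ is the all-ones Simplex codeword when $m$... ) --- this parity bookkeeping is where care is needed, so I would verify $\mathbf{1}\in\mathcal{S}(m)$ iff needed and conclude the row space is $\mathcal{S}(m) = \mathcal{H}(m)^\perp$, making $H^{(2)}$ a parity-check matrix of $\mathcal{H}(m)$, with rows being precisely the nonzero codewords of $\mathcal{S}(m)$ (up to the complementation, which is an automorphism permuting nonzero Simplex codewords among themselves, or identifying them with the $\bar P$'s as an alternative constant-weight representation).

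The main obstacle I anticipate is the second part of the $H^{(2)}$ claim: pinning down that the complements $\chi(\bar P)$ of hyperplanes are \emph{literally} the set of nonzero Simplex codewords (not merely that they span the same code). This hinges on whether $\mathbf{1} = \chi(PG(m-1,2))$ is a codeword of $\mathcal{S}(m)$, i.e., on the weight parity, and on the fact that $\mathcal S(m)$ is a constant-weight-$2^{m-1}$ code whose codewords biject with hyperplanes via $\mathbf c \mapsto \{\text{support of }\mathbf c\}$ --- so $\chi(\bar P)$ being a Simplex codeword requires $\bar P$ to be (the support of) one, which needs $\bar P$ itself to be the complement-structure that arises. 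I would resolve this by the dimension/cardinality count ($2^m-1$ hyperplanes, $2^m-1$ nonzero Simplex codewords, constant weight forces the support map to be the bijection) together with checking $\mathbf 1\in\mathcal S(m)^\perp = \mathcal H(m)$ is false in general but $\mathbf 1 \in \mathcal S(m)$ may hold --- and if it does not, then one works with the code $\mathcal S(m)$ directly via $\chi(P)$ rather than $\chi(\bar P)$, noting the problem statement's $H^{(2)}$ uses complements precisely to make the row weights uniform at $2^{m-1}$ rather than $2^{m-1}-1$, and the two span the same $\mathbb F_2$-code once $\mathbf 1$ is accounted for. Everything else is a direct application of Lemmas~\ref{lem2} and~\ref{lemeg3} plus elementary $\mathbb{F}_2$ linear algebra.
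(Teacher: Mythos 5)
Your handling of $H^{(1)}$ is correct and is essentially the paper's own argument: Lemma \ref{lem2} with $r=m-2$ shows the line incidence vectors generate $RM(m,m-2)^*=\mathcal{H}(m)$, so $H^{(1)}$ is a parity-check matrix of $\mathcal{S}(m)$, and the count $A_3=\frac13(2^m-1)(2^{m-1}-1)$, which equals the number of lines of $PG(m-1,2)$, shows the rows are exactly the minimum codewords.

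The $H^{(2)}$ half, however, has a genuine gap. Lemma \ref{lem2} with $r=1$ says the hyperplane incidence vectors generate $RM(m,1)^*$, the \emph{punctured} first-order Reed--Muller code, which has dimension $m+1$ and is not $\mathcal{S}(m)$ (dimension $m$); correspondingly each $\chi(P)$ has weight $2^{m-1}-1$, while every nonzero Simplex codeword has weight $2^{m-1}$, so the vectors $\chi(P)$ are not Simplex codewords at all, contrary to the ``standard fact'' you invoke. The attempted repairs are also off: $\mathbf{1}\notin\mathcal{S}(m)$ (its weight $2^m-1$ is not $2^{m-1}$), the span of the rows $\chi(\bar P)=\mathbf{1}+\chi(P)$ is not $\langle\mathbf{1}\rangle+\mathcal{S}(m)$, and the $\chi(P)$'s and $\chi(\bar P)$'s do not span the same code (the former span $RM(m,1)^*$, the latter span $\mathcal{S}(m)$). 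So precisely the point that needs proof --- that each $\chi(\bar P)$ is itself a nonzero codeword of $\mathcal{S}(m)$ and that all $2^m-1$ of them arise this way --- is left unresolved. The paper closes it with a short orthogonality argument: a line $L$ meets a hyperplane $P$ in $1$ or $3$ points, hence meets $\bar P$ in $0$ or $2$ points, so $\chi(\bar P)$ is orthogonal to every row of $H^{(1)}$ and therefore lies in $\mathcal{H}(m)^{\perp}=\mathcal{S}(m)$; since the $2^m-1$ rows $\chi(\bar P)$ are distinct and nonzero and $\mathcal{S}(m)$ has exactly $2^m-1$ nonzero codewords, they are all of them, which simultaneously shows the row space is $\mathcal{S}(m)$ and hence that $H^{(2)}$ is a parity-check matrix of $\mathcal{H}(m)$. (A direct identification of $\chi(\bar P)$ with the support of the codeword $v\mapsto x\cdot v$ would also work, but some such argument must be supplied; your proposal as written does not contain one.)
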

\begin{proof}
By Lemma \ref{lem2}, the lines of $PG(m-1,2)$ generate
$RM(m,m-2)^*$ or $\mathcal{H}(m)$, which implies that $H^{(1)}$ is
a parity-check matrix of $\mathcal{S}(m)$. Since the number of
weight 3 codewords of $\mathcal{H}(m)$ is exactly
${(2^m-1)(2^{m-1}-1)}/{3}$ \cite[p. 64, Cor. 16]{ms}, the rows of
$H^{(1)}$ form all minimum codewords of $\mathcal{H}(m)$.

For the second part, since there are $2^m-1$ rows in $H^{(2)}$ and
$\mathcal{S}(m)$ has $2^m-1$ non-zero codewords, it is enough to
show that every row of $H^{(2)}$ is orthogonal to all rows of
$H^{(1)}$. Let $\chi(\bar P)$ be a row of $H^{(2)}$, where $P$ is
a hyperplane of $PG(m-1,2)$. By \cite[p. 697, problem (8)]{ms}, any
line $L$ either intersects $P$ on a unique point or lies in $P$.
Since $L$ has three points, $L$ can intersect $P$ on either one or
three points, i.e., $L$ can only intersect $\bar P$ on zero or two
points, which implies that $\chi(L)$ is orthogonal to $\chi(\bar
P)$. This finishes the proof.
\end{proof}

In $EG(m,2)$, by (\ref{h}), let
\begin{eqnarray}
\label{h3} H^{(3)}=H_{EG}(m,2)
\end{eqnarray}
be the $2^{m-2}(2^m-1)(2^{m-1}-1)/3\times 2^m$ point-plane
incidence matrix. By Lemma \ref{lemeg3}, $H^{(3)}$ generates
$\hat\mathcal{H}(m)$, which implies that $H^{(3)}$ is a
parity-check matrix of $RM(m,1)$. Clearly, $H^{(3)}$ has uniform
row weight $4$ and uniform column weight $(2^m-1)(2^{m-1}-1)/3$
and girth 4. By (\ref{h}), let
\begin{eqnarray}
\label{h4} H^{(4)}=H_{EG}(m,m-1),
\end{eqnarray}
be the $(2^{m+1}-2)\times 2^m$ point-hyperplane incidence matrix.
By Lemma \ref{lemeg3}, $H^{(4)}$ generates $RM(m,1)$, which implies
that $H^{(4)}$ is a parity-check matrix of $\hat\mathcal{H}(m)$.
Clearly, $H^{(4)}$ has uniform row weight $2^{m-1}$, uniform
column weight $2^m-1$ and girth 4.

Hence, $H^{(1)}, H^{(2)}, H^{(3)}, H^{(4)}$ are respectively the
parity-check matrices of $\mathcal{S}(m)$, $\mathcal{H}(m)$,
$RM(m,1)$, $\hat \mathcal{H}(m)$, and their rows are formed by all
minimum codewords of the dual codes. For convenience, we list
the results in the next table, where $\chi(\cdot)$ denotes an
incidence vector, $L$ a line, $M$ a plane, $P$ a hyperplane, and
$\bar P=PG(m-1,2)\setminus P$.
\begin{eqnarray*}
\begin{array}{cclll}
\mathcal{S}(m) & PG(m-1,2) & H^{(1)} \mbox{ has rows formed by all
 } \chi(L) &H^{(1)*}\\
\mathcal{H}(m) & PG(m-1,2) & H^{(2)}  \mbox{ has rows formed by
all } \chi(\bar P)&H^{(2)*}\\
RM(m,1) & EG(m,2) & H^{(3)}  \mbox{ has rows formed by all  }
\chi(M)&H^{(3)*}\\
\hat\mathcal{H}(m) & EG(m,2) & H^{(4)}  \mbox{ has  rows formed by
all  } \chi(P)&H^{(4)*}
\end{array}
\end{eqnarray*}
Moreover, $H^{(1)*}, H^{(2)*}, H^{(3)*}, H^{(4)*}$ have rows
formed by all non-zero codewords of $\mathcal{H}(m)$,
$\mathcal{S}(m)$, $\hat \mathcal{H}(m)$, $RM(m,1)$, respectively.
Clearly, $H^{(2)}=H^{(2)*}$ and $H^{(4)}$ is formed by all rows except the all-$1$ row
of $H^{(4)*}$.

\begin{Proposition}
\label{prop1} Let $C$ be a binary linear code with parity-check
matrix $H$. Let $C^{\perp}$ be the dual code of $C$. The minimum distance $d^{\perp}$ of $C^{\perp}$
is at least $3$. Then a necessary condition of
$T^{(H)}(x)=T^*(x)$ is that all minimum codewords of $C^{\perp}$
are contained in rows of $H$.
\end{Proposition}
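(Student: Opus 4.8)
The plan is to prove the contrapositive: if some minimum-weight codeword $\mathbf{c}$ of $C^{\perp}$, with support $S:=\mathrm{supp}(\mathbf{c})$ and $|S|=d^{\perp}$, is \emph{not} a row of $H$, then $T^{(H)}(x)\neq T^{*}(x)$. Since the rows of $H$ are among the rows of $H^{*}$, every stopping set of $H^{*}$ is a stopping set of $H$ and $T_i(H)\ge T_i^{*}$ for all $i$; hence it suffices to exhibit a single set that is a stopping set of $H$ but not of $H^{*}$. I will fix a point $j_0\in S$ and construct a stopping set $S^{*}$ of $H$ with $S^{*}\cap S=\{j_0\}$. For such an $S^{*}$ the restriction $\mathbf{c}|_{S^{*}}$ has weight $1$, and since $\mathbf{c}$ is a row of $H^{*}$ this shows $S^{*}$ is not a stopping set of $H^{*}$, which is exactly what we need.

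To build $S^{*}$ I would run a greedy completion. Starting from $S^{*}=\{j_0\}$, as long as $S^{*}$ fails to be a stopping set of $H$ there is a row $\mathbf{r}$ of $H$ (a nonzero codeword of $C^{\perp}$) with $|\mathrm{supp}(\mathbf{r})\cap S^{*}|=1$; I then adjoin to $S^{*}$ a point of $\mathrm{supp}(\mathbf{r})$ lying outside $S^{*}\cup S$. Each step strictly enlarges $S^{*}$ while preserving $S^{*}\cap S=\{j_0\}$ (the adjoined point is not in $S$), so after finitely many steps $S^{*}$ is the desired stopping set of $H$. (If $\{j_0\}$ is already a stopping set of $H$ we stop at once; the argument above still applies.)

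The one substantive point — the step I expect to be the main obstacle — is that the greedy step is always possible: whenever a row $\mathbf{r}$ of $H$ meets the current $S^{*}$ in a single point $p$, one has $\mathrm{supp}(\mathbf{r})\not\subseteq S^{*}\cup S$. I would prove this by examining $\mathbf{c}+\mathbf{r}\in C^{\perp}$. Assuming $\mathrm{supp}(\mathbf{r})\subseteq S^{*}\cup S$ and using $S^{*}\cap S=\{j_0\}$, we get $\mathrm{supp}(\mathbf{r})=\{p\}\cup T$ with $T\subseteq S\setminus\{j_0\}$. If $p=j_0$ then $\mathrm{supp}(\mathbf{r})\subseteq S$, so $\mathbf{c}+\mathbf{r}$ is supported inside $S$ with weight $d^{\perp}-1-|T|<d^{\perp}$; this forces $\mathbf{c}+\mathbf{r}=\mathbf{0}$, i.e. $\mathbf{r}=\mathbf{c}$, contradicting $\mathbf{c}\notin H$ (and if instead $\mathbf{c}+\mathbf{r}$ were nonzero it would be a codeword of weight $<d^{\perp}$, impossible). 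If $p\neq j_0$ then $p\notin S$ and $\mathbf{c}+\mathbf{r}$ is supported on the nonempty set $(S\setminus T)\cup\{p\}$, so $\mathrm{wt}(\mathbf{c}+\mathbf{r})=d^{\perp}+1-|T|\ge d^{\perp}$, whence $|T|\le 1$ and $\mathrm{wt}(\mathbf{r})=1+|T|\le 2<d^{\perp}$, again impossible for a nonzero codeword of $C^{\perp}$. This is precisely where both $d^{\perp}\ge 3$ and the minimality of $|S|$ enter; the remainder is routine bookkeeping about supports.
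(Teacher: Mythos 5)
Your proposal is correct: the contrapositive setup, the reduction to exhibiting one stopping set of $H$ that meets ${\rm supp}(\mathbf{c})$ in exactly one coordinate, and the two-case weight computation for $\mathbf{c}+\mathbf{r}$ are all sound, and both $d^{\perp}\ge 3$ and the minimality of $\mathbf{c}$ are used exactly where they must be. The difference from the paper is one of packaging rather than substance. The paper does not build the witness greedily: it simply writes it down as $S=\overline{{\rm supp}(\mathbf{y_0})}\cup\{i_0\}$ and verifies in one stroke that every nonzero row $\mathbf{y}$ of $H$ satisfies $|\overline{{\rm supp}(\mathbf{y_0})}\cap{\rm supp}(\mathbf{y})|\ge 2$, via the distance bound $d_H(\mathbf{y},\mathbf{y_0})\le w_H(\mathbf{y_0})-w_H(\mathbf{y})+2\le 2<d^{\perp}$ --- which is the same computation as your analysis of ${\rm wt}(\mathbf{c}+\mathbf{r})$, since $d_H(\mathbf{y},\mathbf{y_0})={\rm wt}(\mathbf{y}+\mathbf{y_0})$. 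In fact your own key lemma already contains the paper's proof: apply it with $S^{*}=\overline{S}\cup\{j_0\}$, for which $S^{*}\cup S$ is the whole coordinate set, and the ``containment'' hypothesis is automatic, so your two cases directly show no row of $H$ meets $S^{*}$ in a single point, i.e.\ $S^{*}$ is a stopping set of $H$ (and not of $H^{*}$, as $\mathbf{c}$ restricted to it has weight one). Thus the greedy completion, its invariant maintenance, and the termination argument are machinery you can delete; what each approach buys is that yours produces a possibly smaller witness set, while the paper's explicit choice gives a shorter, one-step verification.
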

\begin{proof}
Assuming the contrary that there is a minimum codeword of
$C^{\perp}$, say $\mathbf{y_0}$, is not in the rows of $H$, it is
enough to show that there is a stopping set $S$ of $H$ such that
$S$ is not a stopping set of $H^*$. Fixing a coordinate $i_0\in
{\rm supp}(\mathbf{y_0})$, let $S=\overline{{\rm supp}(\mathbf{y_0})} \cup
\{i_0\}$, where
$\overline{{\rm supp}(\mathbf{y_0})}=\{1,2,\ldots,n\}\setminus
{\rm supp}(\mathbf{y_0})$. Since $S\cap {\rm supp}(\mathbf{y_0})=\{i_0\}$, $S$
is not a stopping set of $H^*$. On the other hand, for any
non-zero row $\mathbf{y}$ of $H$, we will show that $|S\cap
{\rm supp}(\mathbf{y})|\ge 2$ which implies that $S$ is a stopping set
of $H$. Clearly, $\mathbf{y}$ is a non-zero codeword of
$C^{\perp}$ other than $\mathbf{y_0}$. We claim that
$|\overline{{\rm supp}(\mathbf{y_0})}\cap {\rm supp}(\mathbf{y})|\geq 2$.
Assume the contrary that
$|\overline{{\rm supp}(\mathbf{y_0})}\cap {\rm supp}(\mathbf{y})|\le 1$. Then
$$|{\rm supp}(\mathbf{y_0})\cap {\rm supp}(\mathbf{y})|=
|{\rm supp}(\mathbf{y})|-|\overline{{\rm supp}(\mathbf{y_0})}\cap
{\rm supp}(\mathbf{y})|\ge |{\rm supp}(\mathbf{y})|-1.$$ Clearly,
$d_H(\mathbf{y},\mathbf{y_0})\ge d^{\perp}$ and
$w_H(\mathbf{y})=|{\rm supp}(\mathbf{y})|\ge d^{\perp}=w_H(\mathbf{y_0})$.
Hence,
\begin{eqnarray*}
d_H(\mathbf{y},\mathbf{y_0})&=&w_H(\mathbf{y})+w_H(\mathbf{y_0})-2|{\rm supp}(\mathbf{y})\cap {\rm supp}(\mathbf{y_0})|\\
&\le & w_H(\mathbf{y})+w_H(\mathbf{y_0})-2(w_H(\mathbf{y})-1)\\
&=& w_H(\mathbf{y_0})-w_H(\mathbf{y})+2\le 2,
\end{eqnarray*}
which leads a contradiction. Hence,
$|\overline{{\rm supp}(\mathbf{y_0})}\cap {\rm supp}(\mathbf{y})|\ge 2$, which
implies that $|S\cap
{\rm supp}(\mathbf{y})|\ge|\overline{{\rm supp}(\mathbf{y_0})}\cap
{\rm supp}(\mathbf{y})|\ge 2$ and thus $S$ is a stopping set of $H$. This
completes the proof.
\end{proof}

\begin{Proposition}
\label{prop2} Let $C$ be a binary linear code with parity-check
matrix $H$. Then a sufficient condition of $T^{(H)}(x)=T^*(x)$ is
that for any non-zero stopping set $S$ of $H$,
\begin{eqnarray}
\label{sc}
S=\bigcup_{\mathbf{x}\in C, {\rm supp}(\mathbf{x})\subseteq S}
{\rm supp}(\mathbf{x}).
\end{eqnarray}
\end{Proposition}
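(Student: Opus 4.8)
The plan is to show that under hypothesis (\ref{sc}), every stopping set of $H$ is in fact a stopping set of $H^*$; combined with the always-true inclusion that stopping sets of $H^*$ are stopping sets of $H$, this forces $T_i(H)=T_i^*$ for all $i$ and hence $T^{(H)}(x)=T^*(x)$. So the real content is one direction: assume $S$ is a non-zero stopping set of $H$ satisfying (\ref{sc}), and prove $S$ is a stopping set of $H^*$, i.e. that for every non-zero codeword $\mathbf{y}\in C^{\perp}$ we have $|S\cap{\rm supp}(\mathbf{y})|\neq 1$.

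First I would set up the argument by contradiction: suppose there is a non-zero $\mathbf{y}\in C^\perp$ with $|S\cap{\rm supp}(\mathbf{y})|=1$, say $S\cap{\rm supp}(\mathbf{y})=\{i_0\}$. The key observation is that $\mathbf{y}$ being orthogonal to every codeword of $C$ means that for each $\mathbf{x}\in C$ with ${\rm supp}(\mathbf{x})\subseteq S$, the inner product $\langle\mathbf{x},\mathbf{y}\rangle=0$ over $\mathbb{F}_2$, which says $|{\rm supp}(\mathbf{x})\cap{\rm supp}(\mathbf{y})|$ is even. But ${\rm supp}(\mathbf{x})\cap{\rm supp}(\mathbf{y})\subseteq S\cap{\rm supp}(\mathbf{y})=\{i_0\}$, so this intersection is either $\emptyset$ or $\{i_0\}$; evenness rules out $\{i_0\}$, hence $i_0\notin{\rm supp}(\mathbf{x})$ for every such $\mathbf{x}$. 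Therefore $i_0\notin\bigcup_{\mathbf{x}\in C,\,{\rm supp}(\mathbf{x})\subseteq S}{\rm supp}(\mathbf{x})$, contradicting (\ref{sc}) since $i_0\in S$. This shows $S$ is a stopping set of $H^*$.

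Then I would close the loop: since every stopping set of $H$ is a stopping set of $H^*$, and conversely (as already noted in the introduction, $H$ is a sub-matrix of $H^*$, so any stopping set of $H^*$ is a stopping set of $H$), the two families of stopping sets coincide, giving $T_i(H)=T_i^*$ for every $i$ and thus $T^{(H)}(x)=T^*(x)$. The empty set is handled trivially since $T_0(H)=T_0^*=1$.

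I do not expect a serious obstacle here — the argument is short and relies only on the parity of inner products over $\mathbb{F}_2$ plus the definition of a stopping set of $H^*$ (a set meeting no dual codeword in exactly one position). The one point requiring a little care is making fully explicit why $|S\cap{\rm supp}(\mathbf{y})|=1$ is exactly the obstruction to being a stopping set of $H^*$ (namely that the row $\mathbf{y}$ restricted to $S$ has weight one), and confirming that restricting attention to codewords $\mathbf{x}$ supported inside $S$ loses nothing, which is precisely what (\ref{sc}) guarantees.
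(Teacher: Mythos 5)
Your proof is correct and follows essentially the same route as the paper: both reduce the claim to showing a stopping set $S$ of $H$ satisfying (\ref{sc}) is a stopping set of $H^*$, and both rest on the evenness of $|{\rm supp}(\mathbf{x})\cap{\rm supp}(\mathbf{y})|$ for $\mathbf{x}\in C$, $\mathbf{y}\in C^{\perp}$ combined with writing $S$ as the union in (\ref{sc}). Your contradiction framing with the point $i_0$ is merely a more explicit rendering of the paper's terse observation that a union of even-sized intersections cannot have cardinality one.
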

\begin{proof}
Let $S$ be a stopping set of $H$ and $S=\bigcup_{\mathbf{x}\in C,
{\rm supp}(\mathbf{x})\subseteq S} {\rm supp}(\mathbf{x})$. We only need to
show that $S$ is also a stopping set of $H^*$, i.e., for any fixed
row of $H^*$, say $\mathbf{y}$,  $|S\cap {\rm supp}(\mathbf{y})|\ne 1$,
or
\begin{eqnarray}
\label{eq10} \left |\bigcup_{\mathbf{x}\in C,
{\rm supp}(\mathbf{x})\subseteq S}\Big[ {\rm supp}(\mathbf{x})\cap
{\rm supp}(\mathbf{y})\Big]\right|\ne 1.
\end{eqnarray}
Since $\mathbf{y}$ represents a parity-check equation of $C$,
${\rm supp}(\mathbf{x})\cap {\rm supp}(\mathbf{y})$ must have  even number
elements for any $\mathbf{x}\in C$. Thus (\ref{eq10}) holds, which
finishes the proof.
\end{proof}
\begin{Remark}
Suppose a parity-check matrix $H$ of $C$ is formed by all minimum
codewords of $C^{\perp}$ with $d^{\perp}\ge 3$. It is easy to see by Propositions
$\ref{prop1}$ and $\ref{prop2}$ that $H$ is BEC-optimal provided that
$H$ satisfies the condition of Proposition $\ref{prop2}$.
\end{Remark}

\begin{Lemma}
\label{lems1} Let $\mathcal{S}(m)$ be the $[2^m-1,m,2^{m-1}]$
Simplex code with parity-check matrix $H^{(1)}$. Then $S\subseteq
PG(m-1,2)$ is a stopping set if and only if $S=PG(m-1,2)$ or $\bar
S=PG(m-1,2)\setminus S$ is a flat of $PG(m-1,2)$.
\end{Lemma}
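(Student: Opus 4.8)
The plan is to translate the stopping-set condition on $S$ into a condition on the complement $\bar S=PG(m-1,2)\setminus S$, and then recognize that condition as the flat criterion of Lemma~\ref{lem1}(i). Recall that by Lemma~\ref{lem3} the rows of $H^{(1)}$ are exactly the incidence vectors $\chi(L)$ of the lines $L$ of $PG(m-1,2)$, and every such line contains exactly three points. Hence $S$ is a stopping set of $H^{(1)}$ if and only if no line meets $S$ in exactly one point, i.e.\ $|L\cap S|\in\{0,2,3\}$ for every line $L$. Since $|L\cap\bar S|=3-|L\cap S|$, this is equivalent to $|L\cap\bar S|\neq 2$ for every line $L$. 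So the whole lemma reduces to proving: for $\bar S\neq\emptyset$, we have $|L\cap\bar S|\neq 2$ for all lines $L$ if and only if $\bar S$ is a flat.

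For the ``if'' direction I would first dispose of the small cases: if $\bar S=\emptyset$ then $S=PG(m-1,2)$ and every line has $|L\cap S|=3$, so $S$ is a stopping set; if $|\bar S|=1$ then $\bar S$ is a $0$-flat and trivially $|L\cap\bar S|\le 1\neq 2$ for every line. If $\bar S$ is a flat with $|\bar S|\ge 2$, then by Lemma~\ref{lem1}(i) any line through two points of $\bar S$ lies entirely in $\bar S$; consequently, for every line $L$ either $|L\cap\bar S|\le 1$ or $L\subseteq\bar S$, so $|L\cap\bar S|\in\{0,1,3\}$ and in particular $|L\cap\bar S|\neq 2$, whence $S$ is a stopping set.

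For the ``only if'' direction, assume $S$ is a stopping set and $\bar S\neq\emptyset$; I want to show $\bar S$ is a flat. If $|\bar S|=1$ this is immediate, so suppose $|\bar S|\ge 2$. By Lemma~\ref{lem1}(i) it suffices to verify $L(i,i')\subseteq\bar S$ for any two distinct points $i,i'\in\bar S$. Writing $L(i,i')=\{i,i',i''\}$, we have $|L(i,i')\cap\bar S|\ge 2$; since $S$ is a stopping set the value $2$ is excluded, so $|L(i,i')\cap\bar S|=3$, forcing $i''\in\bar S$ and hence $L(i,i')\subseteq\bar S$. Thus $\bar S$ is a flat, completing the proof. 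The argument is short; the only points requiring care are the bookkeeping of the degenerate cases $|\bar S|\in\{0,1\}$ (where the two-point criterion of Lemma~\ref{lem1}(i) is vacuous, and where the special clause ``$S=PG(m-1,2)$'' in the statement is needed because $\emptyset$ is not a flat) and keeping straight that it is $\bar S$, not $S$, that is a flat. The three-point property of lines over $\mathbb{F}_2$ is precisely what makes the weight-one condition on $S$ and the ``no secant line'' condition on $\bar S$ exact complements, so I do not expect a real obstacle here.
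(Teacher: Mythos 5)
Your proof is correct and follows essentially the same route as the paper: translate the weight-one condition on rows of $H^{(1)}$ (the incidence vectors of the three-point lines) into the condition $|L\cap\bar S|\neq 2$ for every line $L$, handle the degenerate cases $|\bar S|\le 1$, and apply the two-point flat criterion of Lemma~\ref{lem1}(i) when $|\bar S|\ge 2$. Your write-up merely spells out the two directions a bit more explicitly than the paper does.
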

\begin{proof}
By the definition of stopping set, $S\subseteq PG(m-1,2)$ is a
stopping set if and only if $H^{(1)}(S)$ has no rows with weight
one, i.e., $|L\cap S|\ne 1$ for any line $L$. Since $L$ has only
three points, $|L\cap S|\ne 1$ is equivalent to $|L\cap \bar S|\ne
2$. Hence, $S$ is a stopping set if and only if any line $L$
intersects $\bar S$ on 0, or 1, or 3 points. Clearly, if $|\bar
S|\le 1$, this is equivalent to $S=PG(m-1,2)$ or $\bar S$ is a
0-flat. Otherwise, if $|\bar S|\ge 2$, this is equivalent to
$L(i,j) \in \bar S$ for any different $i,j \in \bar S$. Hence, the
lemma follows by (i) of Lemma \ref{lem1}.
\end{proof}

By using (ii) of Lemma \ref{lem1} and the similar arguments used in the proof of Lemma
\ref{lems1}, it is easy to obtain the next lemma.

\begin{Lemma}
\label{lemrm1} Let $RM(m,1)$ be the first order Reed-Muller code
with parity-check matrix $H^{(3)}$. Then $S\subseteq EG(m,2)$ is a
stopping set if and only if $S=EG(m,2)$ or $\bar
S=EG(m,2)\setminus S$ is a flat of $EG(m,2)$.
\end{Lemma}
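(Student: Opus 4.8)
The plan is to mirror the proof of Lemma~\ref{lems1} almost verbatim, substituting the point--plane incidence structure of $EG(m,2)$ for the point--line incidence structure of $PG(m-1,2)$ and using part (ii) of Lemma~\ref{lem1} in place of part (i). By the definition of a stopping set, $S\subseteq EG(m,2)$ is a stopping set of $H^{(3)}$ if and only if $H^{(3)}(S)$ has no row of weight one, i.e. $|M\cap S|\ne 1$ for every plane $M$ of $EG(m,2)$. Since each plane $M$ has exactly four points, $|M\cap S|\ne 1$ is equivalent to $|M\cap\bar S|\ne 3$; that is, every plane meets $\bar S$ in $0$, $1$, $2$, or $4$ points. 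This is the exact analogue of the condition ``every line meets $\bar S$ in $0$, $1$, or $3$ points'' that appeared for the Simplex code.

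Next I would split into the two cases on $|\bar S|$. If $|\bar S|\le 2$, then the condition ``no plane meets $\bar S$ in exactly $3$ points'' is automatic, and this case corresponds precisely to $S=EG(m,2)$ (when $\bar S=\emptyset$), or $\bar S$ a $0$-flat (a single point), or $\bar S$ a $1$-flat (any two points of $EG(m,2)$ form a line, hence a flat) --- so in all subcases $\bar S$ is a flat or $S$ is the whole geometry. If $|\bar S|\ge 3$, I claim the condition is equivalent to $M(i,i',i'')\subseteq\bar S$ for every three distinct non-collinear points $i,i',i''\in\bar S$: indeed if some plane $M$ determined by three points of $\bar S$ is not contained in $\bar S$, then $M$ meets $\bar S$ in exactly $3$ points (it cannot meet it in $4$, since the fourth point lies outside $\bar S$), contradicting the stopping set condition; conversely, if every such plane lies in $\bar S$, then no plane can meet $\bar S$ in exactly $3$ points, because any $3$ points of $\bar S$ are either collinear (and then the plane through them, together with any fourth point of that plane, contains a fourth point of $\bar S$ only if\ldots) --- more cleanly, $3$ points of $\bar S$ either span a plane, which then lies in $\bar S$ giving intersection $4$, or are collinear, in which case that line plus its missing pair behaves as in the Simplex argument and the intersection with $\bar S$ is not $3$. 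Then by part (ii) of Lemma~\ref{lem1}, ``$M(i,i',i'')\subseteq\bar S$ for all distinct $i,i',i''\in\bar S$'' is exactly the statement that $\bar S$ is a flat of $EG(m,2)$.

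The only genuine subtlety --- and the step I expect to require the most care --- is the collinear case in the ``if $|\bar S|\ge 3$'' direction: when three chosen points of $\bar S$ happen to lie on a common line $\ell$, they do not determine a plane, so one must separately argue that the stopping set condition on lines (which follows from the condition on planes, since every line of $EG(m,2)$ extends to a plane and the weight-one restriction propagates appropriately) already forces the line condition, or simply observe directly that if $\ell\subseteq\bar S$ then any plane $M\supseteq\ell$ meets $\bar S$ in at least the $2$ points of $\ell$, and in fact the parity/counting argument shows the intersection size is $0,1,2$, or $4$ but never $3$. Since the excerpt explicitly says this lemma follows ``by using (ii) of Lemma~\ref{lem1} and the similar arguments used in the proof of Lemma~\ref{lems1},'' I would keep the write-up short, flagging the four-point versus three-point arithmetic as the one place where the plane case differs from the line case, and otherwise transcribing the Simplex proof with $PG(m-1,2)\rightsquigarrow EG(m,2)$, $L\rightsquigarrow M$, $H^{(1)}\rightsquigarrow H^{(3)}$.
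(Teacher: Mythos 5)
Your proposal follows exactly the route the paper intends: translate the stopping-set condition into ``no plane meets $\bar S$ in exactly $3$ points,'' dispose of $|\bar S|\le 2$ directly, and for $|\bar S|\ge 3$ invoke part (ii) of Lemma~\ref{lem1}, just as the Simplex case used part (i). The one place where your write-up wobbles --- the ``collinear case'' you flag as the genuine subtlety, with its unresolved ``only if\ldots'' clause --- is in fact vacuous: in $EG(m,2)$ every line ($1$-flat) is a coset of a $1$-dimensional $\mathbb{F}_2$-subspace and so contains exactly two points (the paper states this explicitly), hence any three distinct points of $\bar S$ are automatically non-collinear and determine a unique plane $M(i,i',i'')$. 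With that observation your converse argument closes cleanly: a plane meeting $\bar S$ in three points would equal the plane generated by those three points and hence lie in $\bar S$, a contradiction; no separate line argument is needed, so you should simply delete that digression rather than try to repair it.
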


\begin{Theorem}
\label{th1} $H^{(1)}, H^{(2)}, H^{(3)}, H^{(4)}$ are the
BEC-optimal parity-check matrices for $\mathcal{S}(m)$,
$\mathcal{H}(m)$, $RM(m,1)$, $\hat \mathcal{H}(m)$, respectively.
Moreover, for each of the above four cases, there is no other
BEC-optimal parity-check matrix up to the permutation of rows.
\end{Theorem}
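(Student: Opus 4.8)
The plan is to prove the two assertions of Theorem~\ref{th1} separately: first, that each of $H^{(1)},H^{(2)},H^{(3)},H^{(4)}$ attains $T^{(H^{(i)})}(x)=T^*(x)$ with the smallest possible number of rows, and second, that any parity-check matrix with these two properties coincides with $H^{(i)}$ after permuting its rows.

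For $H^{(1)}$ and $H^{(3)}$ I would invoke the Remark following Proposition~\ref{prop2}. By Lemma~\ref{lem3} (and the discussion preceding Proposition~\ref{prop1}) the rows of $H^{(1)}$ and $H^{(3)}$ are, respectively, all minimum codewords of the dual codes $\mathcal{H}(m)$ and $\hat{\mathcal{H}}(m)$, whose minimum distances are at least $3$; so it suffices to verify the sufficient condition~(\ref{sc}), after which the Remark delivers BEC-optimality. For $H^{(1)}$, let $S$ be a nonempty stopping set; by Lemma~\ref{lems1} either $S=PG(m-1,2)$ or $\bar S$ is a flat. Since the nonzero codewords of $\mathcal{S}(m)$ are precisely the vectors $\chi(\bar P)$ with $P$ a projective hyperplane (Lemma~\ref{lem3}), the right-hand side of~(\ref{sc}) equals $\bigcup_{P\supseteq\bar S}\bar P=\overline{\bigcap_{P\supseteq\bar S}P}$, and the standard fact that a proper flat of $PG(m-1,2)$ is the intersection of the hyperplanes through it (and that no point lies on every hyperplane) shows this is exactly $S$. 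For $H^{(3)}$ the argument runs in parallel, using Lemma~\ref{lemrm1} and the description of the nonzero codewords of $RM(m,1)$ as the incidence vectors of the affine hyperplanes of $EG(m,2)$ together with the all-ones vector; the geometric fact needed here is that through any point $x$ lying outside a flat $F$ of $EG(m,2)$ of dimension at most $m-2$ there passes an affine hyperplane disjoint from $F$, together with the observations that the complement of an affine hyperplane is again an affine hyperplane and that $EG(m,2)$ is the support of the all-ones codeword of $RM(m,1)$.

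For $H^{(2)}$ and $H^{(4)}$ I would proceed directly. As $H^{(2)}=H^{(2)*}$, the equality $T^{(H^{(2)})}(x)=T^*(x)$ is immediate. For $H^{(4)}$, the matrix $H^{(4)*}$ is $H^{(4)}$ with the all-ones row adjoined, and a singleton $\{v\}$ is never a stopping set of $H^{(4)}$ because some affine hyperplane contains $v$; hence every nonempty stopping set of $H^{(4)}$ has size at least $2$ and so already satisfies the constraint imposed by the all-ones row, giving $T^{(H^{(4)})}(x)=T^*(x)$. In both cases the minimality of the row count follows from Proposition~\ref{prop1}: every parity-check matrix $H$ with $T^{(H)}(x)=T^*(x)$ must contain all $N$ minimum codewords of $C^\perp$ among its rows, where $N$ is $2^m-1$, $(2^m-1)(2^{m-1}-1)/3$, $2^{m-2}(2^m-1)(2^{m-1}-1)/3$, and $2^{m+1}-2$ for the four codes, which is in each case the number of rows of the corresponding $H^{(i)}$.

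Finally, for uniqueness, let $H$ be any BEC-optimal parity-check matrix of one of the four codes $C$. Every row of a parity-check matrix of $C$ lies in $C^\perp$, and by minimality $H$ has no zero rows, so each row of $H$ is a nonzero codeword of $C^\perp$; by Proposition~\ref{prop1} all $N$ minimum codewords of $C^\perp$ appear among these rows, and since the least possible number of rows is exactly $N$, the matrix $H$ has precisely $N$ rows. Hence the rows of $H$ are the $N$ minimum codewords of $C^\perp$, each occurring once, so $H$ equals the corresponding $H^{(i)}$ up to a row permutation. I expect the main obstacle to be the verification of~(\ref{sc}) in the Euclidean case, namely the affine-geometry claim that one can pass an affine hyperplane through a prescribed point while avoiding a given flat; the projective case for $H^{(1)}$ is the familiar statement that a flat is cut out by the hyperplanes containing it, and the remaining cases are essentially bookkeeping with the known weight distributions. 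It will also be necessary to treat the smallest values of $m$ by hand, since there $d^\perp$ can drop below $3$ and Proposition~\ref{prop1} does not apply verbatim.
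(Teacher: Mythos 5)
Your proposal is correct and follows essentially the same route as the paper: Proposition~\ref{prop1} gives row-minimality and uniqueness via the count of minimum dual codewords, Proposition~\ref{prop2} together with Lemmas~\ref{lems1} and \ref{lemrm1} (flats cut out by the hyperplanes containing them, complements of hyperplanes being codeword supports) handles $H^{(1)}$ and $H^{(3)}$, and the identities $H^{(2)}=H^{(2)*}$ and ``$H^{(4)*}=H^{(4)}$ plus the all-ones row'' handle the other two, with your singleton check for $H^{(4)}$ merely making explicit what the paper leaves implicit. The only blemish is a bookkeeping slip: in your list of row counts the values $2^m-1$ and $(2^m-1)(2^{m-1}-1)/3$ are swapped relative to the stated order $\mathcal{S}(m),\mathcal{H}(m)$.
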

\begin{proof}
Note that the rows of $H^{(1)}, H^{(2)}, H^{(3)}, H^{(4)}$
are formed by all minimum codewords of the dual codes of
$\mathcal{S}(m)$, $\mathcal{H}(m)$, $RM(m,1)$, $\hat \mathcal{H}(m)$, respectively.
By Proposition \ref{prop1}, we only need to show that $T^{(H)}(x)=T^*(x)$
for $H=H^{(1)}, H^{(2)}, H^{(3)}, H^{(4)}$.

(i) $H=H^{(1)}$:\quad We show that $H^{(1)}$ satisfies the sufficient condition given in
Proposition \ref{prop2}. Let $S$ be a non-empty stopping set of
$H^{(1)}$. We need to show that $S$ satisfies (\ref{sc}).
If $|S|= n$, it is true since there is no codewords of weight $1$ in the dual code of $\mathcal{S}(m)$.
If $1\le |S|\le n-1$, by Lemma \ref{lems1}, $\bar S$ is a $\mu$-flat of
$PG(m-1,2)$, where $0\le \mu\le m-2$. Let
$P_1,P_2,\ldots,P_{A(m-2,\mu)}$ be all the hyperplanes which
contain $\bar S$, then $\bar S=\bigcap_{j=1}^{A(m-2,\mu)} P_j$, or
$ S=\bigcup_{j=1}^{A(m-2,\mu)} \bar P_j$. Since every $\bar P_j$
is the support of a codeword of $\mathcal{S}(m)$, (\ref{sc}) holds for $S$.

(ii) $H=H^{(2)}$:\quad It follows from the fact that $H^{(2)}=H^{(2)*}$.

(iii) $H=H^{(3)}$:\quad It is totally similar to the case (i).

(iv) $H=H^{(4)}$:\quad It follows from the fact that $H^{(4)}$ is formed by all rows except the all-$1$ row
of $H^{(4)*}$.
\end{proof}

\section{Generators in Finite Geometries}

In this section, we introduce the concept of stopping generators of finite geometries and give some enumeration results
that will be used to determine the SSDs for the BEC-optimal parity-check matrices $H^{(1)}, H^{(2)}, H^{(3)}, H^{(4)}$
of $\mathcal{S}(m)$, $\mathcal{H}(m)$, $RM(m,1)$, $\hat \mathcal{H}(m)$.

Let $S$ be a non-empty subset of $FG(m,2)$. For any $j\in S$,
denote $S_j=S\setminus\{j\}$. A point $i$ is said to be
\emph{independent} to $S$ if $i\not\in \langle S\rangle$.
$S$ is said to be \emph{independent} if for any
$j\in S$, $j$ is independent to $\langle S_j\rangle$. The empty set
$\emptyset$ is defined as an independent set. It is known from
\cite{ms} that the dimension of a flat $F$ of $FG(m,2)$ is equal
to $|J|-1$, where $J$ is an independent subset of $F$ with maximum
size. Clearly, for a non-empty set $S$, $S$ is independent if and
only if $\langle S\rangle$ is an $(|S|-1)$-flat.

For an integer $0\le l\le m$, let $F^{(l)}$ denote an $l$-flat of
$FG(m,2)$. $F^{(l)}$ has $N(l,0)$ points. Let $u\ge 1$, if a
$u$-set generates $F^{(l)}$, we call it a \emph{$u$-generator} of
$F^{(l)}$. If a $u$-generator $S$ of $F^{(l)}$ satisfies $\langle
S_j\rangle= \langle S \rangle =F^{(l)}$ for any $j\in S$, we call
$S$ a \emph{stopping $u$-generator} of $F^{(l)}$. Define $B(u,l)$
as the number of $u$-generators of $F^{(l)}$ and $G(u,l)$ as the
number of stopping $u$-generators of $F^{(l)}$, i.e, for $u\ge 1$
and $l\ge 0$,
\begin{eqnarray}
\label{b} \!\!\!\!B(u,l)\!\!\!\!&=&\!\!\!\!|\{S\subseteq F^{(l)}:
|S|=u,\;\langle
S\rangle=F^{(l)}\}|,\\
\!\!\!\!G(u,l)\!\!\!\!&=&\!\!\!\!|\{S\subseteq F^{(l)}:
|S|=u,\;\forall j\in S, \langle S_j\rangle= F^{(l)} \}|.\label{g}
\end{eqnarray}
Define $B(u,l)=0$ if $u\le 0$ or $l<0$. Clearly,
\begin{eqnarray}
\label{g3} G(u,l)&\le& B(u,l),\\
\label{g4} B(u,l)&=&0 \quad \mbox{if } u\le l.
\end{eqnarray}
For a $u$-set $S$, where $u\ge 1$, $S$ is a $u$-generator of a
$(u-1)$-flat if and only if $S$ is independent. A non-empty
independent set $S$ could not be a stopping generator, this is
because for any $j\in S$, $\langle S_j\rangle \subset \langle
S\rangle$. Hence, $G(u,u-1)=0$ for any $u\ge 1$. Combining this fact
with (\ref{g3})-(\ref{g4}), we have
\begin{eqnarray}
\label{g5}
G(u,l)=0 \quad\mbox{if }u\le l+1.
\end{eqnarray}

\begin{Lemma}
\label{lem7} For any $u\ge 1$ and $l\ge 0$, $B(u,l)$ satisfies the
following recursive equation
\begin{eqnarray}
\label{b1}\!\!\!\!B(1,0)\!\!\!\!&=&\!\!1, \quad
B(u,0) = 0 \quad \mbox{ if } u\ge 2, \\
\label{b2}\!\!\!\!{N(l,0)\choose u}&=&\sum_{i=0}^l
N(l,i)\;B(u,i),\; l\ge 0.
\end{eqnarray}
\end{Lemma}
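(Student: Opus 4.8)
The plan is to prove the two parts of Lemma \ref{lem7} separately. The base case \eqref{b1} is immediate: a $0$-flat consists of a single point, so the only way to generate it is with the $1$-set containing that point, giving $B(1,0)=1$, while no $u$-set with $u\ge 2$ fits inside a $1$-point flat, giving $B(u,0)=0$.

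For the recursion \eqref{b2}, the idea is to classify every $u$-subset $S$ of the fixed $l$-flat $F^{(l)}$ according to the dimension of the subflat it generates. Since $S\subseteq F^{(l)}$, we have $\langle S\rangle = F^{(i)}$ for some $i$-flat contained in $F^{(l)}$, where $0\le i\le l$ (and necessarily $i\ge 0$ since $S$ is non-empty, with $B(u,i)=0$ automatically when $u\le i$ by \eqref{g4}, so the low-dimensional terms vanish harmlessly). The left-hand side ${N(l,0)\choose u}$ counts all $u$-subsets of the $N(l,0)$ points of $F^{(l)}$. On the right-hand side, for each fixed $i$ there are $N(l,i)$ distinct $i$-flats inside $F^{(l)}$ (this is exactly the quantity $N(l,i)$ from \eqref{fg1}--\eqref{fg2}), and for each such $i$-flat $F^{(i)}$ the number of $u$-subsets $S$ with $\langle S\rangle = F^{(i)}$ is by definition $B(u,i)$. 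The key observation making this a clean partition is that $\langle S\rangle$ is \emph{uniquely} determined by $S$ (as noted in the paragraph before Lemma \ref{lem1}: $\langle S\rangle$ solely exists), so each $u$-subset $S$ is counted in exactly one term $N(l,i)\cdot B(u,i)$ — namely the one with $i = \dim\langle S\rangle$ and with the specific $i$-flat being $\langle S\rangle$ itself. Summing over $i$ from $0$ to $l$ therefore recovers every $u$-subset of $F^{(l)}$ exactly once, which is \eqref{b2}.

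I do not expect a genuine obstacle here; the whole content is the double-counting identity together with the uniqueness of $\langle S\rangle$. The only point requiring a word of care is the range of $i$: one should note that the summand for a given $i$ is $B(u,i)$ copies distributed over the $N(l,i)$ subflats, that $B(u,i)=0$ whenever $u\le i$ so the sum effectively runs only over $i<u$, and that $\langle S\rangle\subseteq F^{(l)}$ guarantees $i\le l$ so no terms beyond $i=l$ are needed. One may also remark that specializing to $l = u-1$ gives ${N(u-1,0)\choose u} = \sum_{i=0}^{u-1} N(u-1,i) B(u,i) = N(u-1,u-1)B(u,u-1) = B(u,u-1)$, which together with \eqref{b2} provides the recursion in the form actually used later to extract closed-form expressions for $B(u,l)$.
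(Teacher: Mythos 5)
Your argument is correct and is essentially the paper's own proof: partition the ${N(l,0)\choose u}$ $u$-subsets of $F^{(l)}$ according to the (unique) flat $\langle S\rangle$ they generate, and for each dimension $i$ count the $N(l,i)$ $i$-flats times the $B(u,i)$ $u$-generators of each, exactly as in the paper. One caveat on your closing side remark, which is not needed for the lemma: setting $l=u-1$ does \emph{not} collapse the sum to $B(u,u-1)$, since (\ref{g4}) makes $B(u,i)$ vanish only for $i\ge u$, not for $i<u-1$ (for instance, any four distinct points of a plane of $PG(m-1,2)$ generate that plane, so $B_{PG}(4,2)>0$); the closed form is instead extracted from (\ref{b2}) by the inversion carried out in Lemma \ref{lemb}.
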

\begin{proof}
(\ref{b1}) is obvious by (\ref{b}) and (\ref{g4}). In $F^{(l)}$,
there are ${N(l,0)\choose u}$ $u$-subsets, and each of which
generates an $i$-flat, where $0\le i \le l$. There are $N(l,i)$
$\;i$-flats in $F^{(l)}$, and each of which contains $B(u,i)$
$u$-generators of this $i$-flat $F^{(i)}$. Clearly, these $u$-sets are distinct,
which implies the lemma.
\end{proof}

\begin{Lemma}
\label{lemb}
\begin{eqnarray}
\label{bb} B(u,l)&=&\sum_{j=0}^l (-1)^j 2^{j(j-1)/2} N(l,l-j)
{N(l-j,0)\choose u},\\
\label{bpg} B_{PG}(u,l)&=&\sum_{j=0}^l (-1)^j 2^{j(j-1)/2}
\left[l+1\atop j\right] {2^{l-j+1}-1\choose u},\\
\label{beg} B_{EG}(u,l)&=&\sum_{j=0}^l (-1)^j 2^{j(j+1)/2}
\left[l\atop j\right] {2^{l-j}\choose u}.
\end{eqnarray}
\end{Lemma}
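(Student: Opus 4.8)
The plan is to solve the recursion in Lemma \ref{lem7} by Möbius inversion over the lattice of flats, which here amounts to exploiting the identity (\ref{Ngauss3}) and the Cauchy Binomial Theorem specialization (\ref{cauchy2}). First I would rewrite (\ref{b2}) as ${N(l,0)\choose u}=\sum_{i=0}^{l} N(l,i)\,B(u,i)$ and try to invert it: the claim is that the inverse kernel is $(-1)^{j}2^{j(j-1)/2}N(l,l-j)$, i.e. that (\ref{bb}) holds. To verify this, substitute the proposed formula (\ref{bb}) for $B(u,i)$ into the right-hand side of (\ref{b2}) and show the sum collapses to ${N(l,0)\choose u}$. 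After interchanging the order of summation, the coefficient of ${N(l-j,0)\choose u}$ (equivalently, of the term coming from the $(l-j)$-dimensional flats) becomes a sum of the form $\sum N(l,i)N(i,i-(j-i'))\cdot(\pm)2^{(\cdot)}$; using (\ref{Ngauss3}), $N(l,i)N(i,k)=\left[l-k\atop i-k\right]N(l,k)$, each such inner sum factors as $N(l,l-j)$ times a Gaussian-binomial alternating sum $\sum_{t}(-1)^{t}2^{t(t-1)/2}\left[j\atop t\right]$, which by (\ref{cauchy2}) equals $\delta_{j,0}$. Only the $j=0$ term survives, leaving exactly $N(l,0)\choose u = {N(l,0)\choose u}$, as required. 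Since the system (\ref{b1})--(\ref{b2}) determines $B(u,l)$ uniquely by induction on $l$ (the $l$-term $N(l,l)B(u,l)=B(u,l)$ is isolated), matching this identity establishes (\ref{bb}).

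Next, (\ref{bpg}) and (\ref{beg}) are just the instantiations of (\ref{bb}) for $FG=PG$ and $FG=EG$. For $PG(m-1,2)$ I would substitute $N_{PG}(l,l-j)=\left[l+1\atop l-j+1\right]=\left[l+1\atop j\right]$ (using (\ref{Ngauss}) and the symmetry $\left[n\atop m\right]=\left[n\atop n-m\right]$ in (\ref{gauss1})) and $N_{PG}(l-j,0)=2^{l-j+1}-1$; this gives (\ref{bpg}) directly with no further work. For $EG(m,2)$, substitute $N_{EG}(l,l-j)=2^{j}\left[l\atop j\right]$ from (\ref{Ngauss1}) and $N_{EG}(l-j,0)=2^{l-j}$; then the power of $2$ in the $j$-th term of (\ref{bb}) is $2^{j(j-1)/2}\cdot 2^{j}=2^{j(j+1)/2}$, yielding (\ref{beg}).

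The main obstacle is the bookkeeping in the interchange-of-summation step: one must carefully track which flats of which dimension contribute to a given ${N(l-j,0)\choose u}$ term, and apply (\ref{Ngauss3}) with the right indices so that the residual sum becomes an alternating Gaussian-binomial sum over a clean range. A small amount of care is also needed at the boundary (the conventions $B(u,l)=0$ for $u\le 0$ or $l<0$, and $\left[i_2\atop i_1\right]=0$ for $i_1>i_2$) to confirm the telescoping is valid for all $u\ge 1$, $l\ge 0$, including small cases such as $l=0$ where (\ref{bb}) correctly reduces to $B(u,0)=\delta_{u,1}$. Once the combinatorial identity from (\ref{cauchy2}) is in hand, the rest is routine substitution.
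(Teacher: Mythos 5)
Your proposal is correct and uses essentially the same machinery as the paper: the identity (\ref{Ngauss3}), the specialization (\ref{cauchy2}) of the Cauchy binomial theorem, and an interchange of summation that collapses to a Kronecker delta, followed by routine substitution of (\ref{Ngauss})--(\ref{Ngauss1}) for the $PG$ and $EG$ cases. The only (harmless) difference is direction: the paper substitutes the recursion of Lemma \ref{lem7} into the right-hand side of (\ref{bb}) and reduces it directly to $B(u,l)$, which makes your extra uniqueness-by-triangularity step unnecessary.
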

\begin{proof}
By Lemma \ref{lem7},
\begin{eqnarray*}
{N(l-j,0)\choose u}&=&\sum_{k=0}^{l-j}N(l-j,k)B(u,k).
\end{eqnarray*}
Hence, by (\ref{Ngauss3}) and (\ref{cauchy2}),
\begin{eqnarray*}
&&\sum_{j=0}^l (-1)^j 2^{j(j-1)/2} N(l,l-j)
{N(l-j,0)\choose u}\\
&=& \sum_{j=0}^l \sum_{k=0}^{l-j}(-1)^j 2^{j(j-1)/2}
N(l,l-j) N(l-j,k)B(u,k)\\
&=& \sum_{k=0}^l \sum_{j=0}^{l-k}(-1)^j 2^{j(j-1)/2}
\left[l-k\atop j\right]N(l,k) B(u,k) \\
&=& \sum_{k=0}^l N(l,k) B(u,k)\cdot\sum_{j=0}^{l-k}(-1)^j
2^{j(j-1)/2}\left[l-k\atop
j\right]\\
&=& \sum_{k=0}^l N(l,k) B(u,k)\cdot\delta_{l-k,0} \\
&=& N(l,l) B(u,l)=B(u,l).
\end{eqnarray*}
Moreover, (\ref{bpg}) and (\ref{beg}) follow from (\ref{bb}) and
(\ref{Ngauss})-(\ref{Ngauss1}).
\end{proof}

\begin{Lemma}
\label{lem8} Let $l\ge 0$, $u\ge l+1$, and $S$ be a $u$-generator
of $F^{(l)}$, where $F^{(l)}$ is an $l$-flat. Let $J=\{j\in S:
j\not\in\langle S_j \rangle \}$. Then\\
{\rm(i)}\quad $J$ is an independent set;\\
{\rm(ii)}\quad $J=\emptyset$ if and only if $S$ is a stopping
$u$-generator of $F^{(l)}$;\\
{\rm(iii)}\quad $J=S$ if and only if $S$ is an independent set;\\
{\rm(iv)}\quad otherwise, suppose $J$ is a non-empty proper subset of
$S$ and $|J|=k$, then $\langle S\setminus J\rangle$ is an
$(l-k)$-flat,  $1\le k\le l-1$, and $S\setminus J$ is a stopping
$(u-k)$-generator of $\langle S\setminus J\rangle$.
\end{Lemma}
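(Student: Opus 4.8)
The plan is to dispatch (i)--(iii) directly from the definitions together with the monotonicity of the generated-flat operator, and then to prove (iv) by isolating a single ``adding-back'' dimension count that I apply twice. For (i), I would note that for any $j\in J$ the inclusion $J\setminus\{j\}\subseteq S\setminus\{j\}=S_j$ gives $\langle J\setminus\{j\}\rangle\subseteq\langle S_j\rangle$ by monotonicity; since $j\in J$ means $j\notin\langle S_j\rangle$, we conclude $j\notin\langle J\setminus\{j\}\rangle$, so $J$ is independent. Parts (ii) and (iii) are just unwindings of the definitions: $J=\emptyset$ says $j\in\langle S_j\rangle$, i.e. $\langle S_j\rangle=\langle S\rangle=F^{(l)}$, for every $j\in S$, which is precisely the definition of a stopping $u$-generator; and $J=S$ says $j\notin\langle S_j\rangle$ for every $j\in S$, which is precisely independence of $S$.

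The heart of (iv) is the following adding-back principle, which I would state first. Suppose $A\subseteq S$ satisfies $j\notin\langle A\cup(J\setminus\{j\})\rangle$ for every $j\in J$. Ordering $J=\{j_1,\dots,j_k\}$, monotonicity gives $\langle A\cup\{j_1,\dots,j_{i-1}\}\rangle\subseteq\langle A\cup(J\setminus\{j_i\})\rangle$, and the right-hand side omits $j_i$; hence $j_i\notin\langle A\cup\{j_1,\dots,j_{i-1}\}\rangle$, so each of the $k$ reinsertions raises the dimension by exactly one and $\dim\langle A\cup J\rangle=\dim\langle A\rangle+k$. Taking $A=T:=S\setminus J$, the hypothesis holds because $j\notin\langle S_j\rangle=\langle T\cup(J\setminus\{j\})\rangle$ for $j\in J$; since $T\cup J=S$ this yields $\dim\langle T\rangle=l-k$, so $\langle T\rangle$ is an $(l-k)$-flat of which $T$ is a $(u-k)$-generator. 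For the range $1\le k\le l-1$: nonemptiness of $J$ gives $k\ge1$, and $T\ne\emptyset$ (as $J$ is proper) forces $\dim\langle T\rangle=l-k\ge0$, i.e. $k\le l$. To exclude $k=l$, observe this would make $\langle T\rangle$ a $0$-flat, so $T=\{p\}$ is a single point with $S=J\cup\{p\}$; since $p\notin J$, it is not essential, so $p\in\langle S\setminus\{p\}\rangle=\langle J\rangle$, whence $\langle S\rangle=\langle J\rangle$, which by (i) is a $(k-1)$-flat of dimension $l-1$, contradicting $\dim\langle S\rangle=l$.

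It then remains to see that $T$ is a \emph{stopping} $(u-k)$-generator of $\langle T\rangle$. I would fix $i\in T$ and apply the adding-back principle with $A=T\setminus\{i\}$; its hypothesis again follows from $j\notin\langle S_j\rangle\supseteq\langle(T\setminus\{i\})\cup(J\setminus\{j\})\rangle$, giving $\dim\langle(T\setminus\{i\})\cup J\rangle=\dim\langle T\setminus\{i\}\rangle+k$. Because $i\in T$ is not essential we have $i\in\langle S_i\rangle$, so $\langle S_i\rangle=\langle S\rangle$ has dimension $l$ and the left-hand side equals $l$; thus $\dim\langle T\setminus\{i\}\rangle=l-k=\dim\langle T\rangle$, and the inclusion $\langle T\setminus\{i\}\rangle\subseteq\langle T\rangle$ forces equality, so removing any point of $T$ still generates $\langle T\rangle$. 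The main obstacle is getting the dimension bookkeeping in the adding-back step exactly right---specifically, recognizing that the defining condition $j\notin\langle S_j\rangle$, phrased against the full complement, is already strong enough (via monotonicity) to guarantee that each reinserted point is independent over every intermediate set---together with carefully ruling out the degenerate single-point case that would otherwise break the bound $k\le l-1$.
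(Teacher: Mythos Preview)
Your proof is correct and follows essentially the same strategy as the paper's. Both arguments establish (i)--(iii) directly from the definitions via monotonicity of $\langle\cdot\rangle$, and both prove (iv) by tracking dimensions as the elements of $J$ are removed (or reinserted) one at a time, then ruling out $k=l$ via the single-point contradiction, and finally showing the stopping property by running the same dimension count on $S_i$ for $i\in T$.

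The one organizational difference is that you abstract the sequential dimension step into a single ``adding-back principle'' (if $j\notin\langle A\cup(J\setminus\{j\})\rangle$ for all $j\in J$, then $\dim\langle A\cup J\rangle=\dim\langle A\rangle+k$) and apply it twice, with $A=T$ and with $A=T\setminus\{i\}$. The paper instead writes out the chain $\langle S\rangle\supset\langle S\setminus\{j_1\}\rangle\supset\cdots\supset\langle S\setminus J\rangle$ explicitly and, for the stopping part, argues by contradiction via the reverse chain. Your packaging is slightly cleaner and makes the reuse for the stopping step transparent, but the underlying mechanism is identical.
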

\begin{proof}
Note that $\emptyset$ is an independent set according to the definition. If
$J\ne \emptyset$, for any $j\in J$, $J\subseteq S$ implies
$J_j\subseteq S_j$ and $\langle J_j\rangle\subseteq \langle
S_j\rangle$. Hence, $j\in J$ implies $j\not\in \langle S_{j}
\rangle$ and $j \not\in \langle J_j\rangle$. This completes the
proof of (i). By the definition of stopping generator, (ii) is
obvious. (iii) follows from (i) and the definition of independent
set. Next, we suppose $J=\{j_1,j_2,\ldots,j_k\}$ $(k\ge 1)$ is a
non-empty proper subset of $S$ and give the proof of (iv).

Note that $\langle S\rangle =F^{(l)}$ is an $l$-flat. Since
$j_1\in \langle S\rangle$ and $j_1\not\in \langle S\setminus
\{j_1\}\rangle$, $\langle S\setminus \{j_1\}\rangle$ is an
$(l-1)$-flat in $\langle S\rangle$. Since $j_2\in
S\setminus\{j_1\}$, $j_2\in \langle S\setminus\{j_1\}\rangle$.
Moreover, $j_2\not\in\langle S\setminus \{j_1, j_2\}\rangle$ since
$j_2\not\in \langle S\setminus \{j_2\}\rangle$. Hence, $\langle
S\setminus \{j_1,j_2\}\rangle$ is an $(l-2)$-flat in $\langle
S\setminus\{j_1\}\rangle$.  Repeating the above procedure, we have
that $\langle S\setminus \{j_1,j_2,j_3\}\rangle$ is an
$(l-3)$-flat in $\langle S\setminus \{j_1,j_2\}\rangle$, $\ldots$,
$\langle S\setminus J\rangle$ is an $(l-k)$-flat in $\langle
S\setminus \{j_1,\ldots,j_{k-1}\}\rangle$. Since $S\setminus J$ is
non-empty, $l-k\ge 0$. If $k=l$, $\langle S\setminus J\rangle$ is
a single point set, say $\{i\}$. Then $S\setminus J=\{i\}$ or
$S_i=J$. Hence, by (i), $\langle S_i\rangle$ is an $(l-1)$-flat,
which implies $\langle S_i\rangle \subset \langle S\rangle$ and
$i\not\in \langle S_i\rangle$. This means $i\in J$ and leads a
contradiction. Therefore $1\le k\le l-1$.

Now, we show that $S\setminus J$ is a stopping generator, i.e.,
for any $j\in S\setminus J$, $\langle S\setminus J\rangle =
\langle S_j\setminus J\rangle$. Assume by contrary that there
exists $j^*\in S\setminus J$ such that $S_{j^*}\setminus J$
generates an $(l-k-1)$-flat in $\langle S\setminus J\rangle$. By
using the inverse procedure given in the last paragraph, it is not
difficult to see that $\langle S_{j^*}\setminus
\{j_1,\ldots,j_{k-1}\}\rangle$ is an $(l-k)$-flat, $\langle
S_{j^*}\setminus \{j_1,\ldots,j_{k-2}\}\rangle$ is an
$(l-k+1)$-flat, $\ldots$, $\langle S_{j^*}\setminus
\{j_1\}\rangle$ is an $(l-2)$-flat, $\langle S_{j^*}\rangle$ is an
$(l-1)$-flat, which implies that $j^*\not\in \langle
S_{j^*}\rangle$ or $j^*\in J$. This gives a contradiction.

Combining these results, the lemma follows.
\end{proof}

\begin{Lemma}
\label{lem9} For any $l\ge 1$ and $0\le k\le l$, let $F^{(l)}$ be
an $l$-flat. Then there are exactly $\alpha(l,k)$ pairs
$(F^{(l-k)},J^{(k)})$ such that $F^{(l-k)}\subseteq F^{(l)}$ is an
$(l-k)$-flat, $J^{(k)}\subseteq F^{(l)}$ is an independent
$k$-set, and $\langle J^{(k)}\cup F^{(l-k)}\rangle=F^{(l)}$, where
\begin{eqnarray}
\label{alpha}\alpha(l,k)&=&\frac{N(l,l-k)}{k!} \prod_{i=1}^{k}
[N(l,0)-N(l-k+i,0)],\\
\label{alphapg} \alpha_{PG}(l,k)&=&\frac{1}{k!}\prod_{i=1}^k
2^{l-i+1}(2^{l-i+2}-1),\\
\label{alphaeg} \alpha_{EG}(l,k)&=&\frac{1}{k!}\prod_{i=1}^k
2^{l-i+1}(2^{l-i+1}-1).
\end{eqnarray}
\end{Lemma}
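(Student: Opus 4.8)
The plan is to enumerate the pairs $(F^{(l-k)},J^{(k)})$ directly: first pick the $(l-k)$-flat $F^{(l-k)}\subseteq F^{(l)}$, then count the admissible sets $J^{(k)}$ that together with $F^{(l-k)}$ generate $F^{(l)}$, and carry out the latter count through an ordered version divided by $k!$. There are $N(l,l-k)$ choices for $F^{(l-k)}$ by the definition of $N(l,l-k)$, so it remains to count, for a fixed $F^{(l-k)}$, the ordered $k$-tuples $(j_1,\ldots,j_k)$ of points of $F^{(l)}$ with $\langle F^{(l-k)}\cup\{j_1,\ldots,j_k\}\rangle=F^{(l)}$.

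For this I would use the standard fact (already invoked in the proof of Lemma~\ref{lem8}) that for a flat $F$ and a point $p\notin F$ the flat $\langle F\cup\{p\}\rangle$ has dimension $\dim F+1$. Looking at the chain $F^{(l-k)}\subseteq\langle F^{(l-k)}\cup\{j_1\}\rangle\subseteq\cdots\subseteq\langle F^{(l-k)}\cup\{j_1,\ldots,j_k\}\rangle$, the terminal flat equals $F^{(l)}$ if and only if every one of the $k$ inclusions is strict, i.e. if and only if each $j_i$ lies outside the preceding flat $\langle F^{(l-k)}\cup\{j_1,\ldots,j_{i-1}\}\rangle$ (if some inclusion is trivial the dimension never reaches $l$, while if all are strict the dimensions run through $l-k,l-k+1,\ldots,l$). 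Since the preceding flat at stage $i$ has dimension $l-k+i-1$ and therefore $N(l-k+i-1,0)$ points, there are exactly $N(l,0)-N(l-k+i-1,0)$ choices for $j_i$, so the number of admissible ordered tuples attached to a fixed $F^{(l-k)}$ is $\prod_{i=1}^{k}\bigl(N(l,0)-N(l-k+i-1,0)\bigr)$. This count also shows the $j_i$ are pairwise distinct and that $\langle\{j_1,\ldots,j_i\}\rangle$ strictly increases with $i$ (as $j_i$ is not even in the larger flat $\langle F^{(l-k)}\cup\{j_1,\ldots,j_{i-1}\}\rangle$), so $J^{(k)}=\{j_1,\ldots,j_k\}$ is automatically an independent $k$-set; the ``independent'' clause in the statement is thus no extra constraint.

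The step I expect to be the crux is checking that each unordered admissible pair $(F^{(l-k)},J^{(k)})$ is recovered from exactly $k!$ of these ordered tuples, equivalently that \emph{every} ordering of an admissible $J^{(k)}$ satisfies the stagewise condition above. Suppose some ordering $(j_1,\ldots,j_k)$ of $J^{(k)}$ had $j_{i_0}\in\langle F^{(l-k)}\cup\{j_1,\ldots,j_{i_0-1}\}\rangle$ for some $i_0$. Then $\langle F^{(l-k)}\cup J^{(k)}\rangle$ would already be generated by $F^{(l-k)}$ together with the $k-1$ points $J^{(k)}\setminus\{j_{i_0}\}$; but adjoining $k-1$ points to the $(l-k)$-flat $F^{(l-k)}$ produces a flat of dimension at most $(l-k)+(k-1)=l-1$, contradicting $\langle F^{(l-k)}\cup J^{(k)}\rangle=F^{(l)}$. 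Hence the ordered-to-unordered map is precisely $k!$-to-one, and the number of pairs is $\frac{1}{k!}\,N(l,l-k)\prod_{i=1}^{k}\bigl(N(l,0)-N(l-k+i-1,0)\bigr)$, which is $\alpha(l,k)$ in (\ref{alpha}).

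Finally, (\ref{alphapg}) and (\ref{alphaeg}) follow by substituting the explicit point counts. For $PG(m-1,2)$, $N_{PG}(l,0)=2^{l+1}-1$ and, by (\ref{Ngauss}), $N_{PG}(l,l-k)=\left[{l+1}\atop{l-k+1}\right]$, so $N_{PG}(l,0)-N_{PG}(l-k+i-1,0)=2^{l-k+i}(2^{k-i+1}-1)$; for $EG(m,2)$, $N_{EG}(l,0)=2^{l}$ and, by (\ref{Ngauss1}), $N_{EG}(l,l-k)=2^{k}\left[l\atop{l-k}\right]$, so $N_{EG}(l,0)-N_{EG}(l-k+i-1,0)=2^{l-k+i-1}(2^{k-i+1}-1)$. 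Reindexing the product by $j=k-i+1$ and using (\ref{gauss})--(\ref{gauss2}) to cancel the denominator of the Gaussian binomial coefficient against the factors $2^{j}-1$ (which leaves a product of consecutive terms $2^{t}-1$) collapses both expressions to the stated closed forms. I expect this last part to be routine $q$-binomial bookkeeping; the real content of the argument is the $k!$-to-one bijection in the third paragraph.
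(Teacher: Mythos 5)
Your proposal is correct and takes essentially the same route as the paper: fix the $(l-k)$-flat ($N(l,l-k)$ choices), count ordered selections of points each lying outside the flat generated so far, and divide by the $k!$ orderings; your dimension argument showing that every ordering of an admissible $J^{(k)}$ satisfies the stagewise condition (and that independence of $J^{(k)}$ is automatic) is precisely the justification behind the paper's brief remark that there are exactly $k!$ repetitions. Note also that your product $\prod_{i=1}^{k}\bigl(N(l,0)-N(l-k+i-1,0)\bigr)$ coincides with the one used in the paper's proof, which is what the displayed formula (\ref{alpha}) is intended to express.
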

\begin{proof}
Clearly, $\alpha(l,0)=1$ which implies that (\ref{alpha}) holds
for $k=0$. It is easy to verify (\ref{alphapg}) and
(\ref{alphaeg}) from (\ref{alpha}) and (\ref{fg1})-(\ref{fg2}).
Hence, it is enough to show (\ref{alpha}) for $1\le k\le l$.
Suppose $F^{(l-k)}\subseteq F^{(l)}$ is a fixed $(l-k)$-flat. We
enumerate all suitable independent $k$-set $J^{(k)}$ as follows.
Choosing the first point from $F^{(l)}\setminus F^{(l-k)}$, there
are $N(l,0)-N(l-k,0)$ choices. $F^{(l-k)}$ and the first point
generate an $(l-k+1)$-flat, say $F^{(l-k+1)}$. Choosing the second
point from $F^{(l)}\setminus F^{(l-k+1)}$, there are
$N(l,0)-N(l-k+1,0)$ choices. $F^{(l-k+1)}$ and the second point
generate an $(l-k+2)$-flat, say $F^{(l-k+2)}$. Repeating the above
procedure, we have $N(l,0)-N(l-1,0)$ choices when choosing the $k$-th
point. It is easy to see that there are exactly $k!$ repetitions
for the above choosing procedure. Hence, there are totally
\begin{eqnarray*}
\frac{1}{k!} \prod_{i=0}^{k-1} [N(l,0)-N(l-k+i,0)]
\end{eqnarray*}
independent sets $J^{(k)}=\{j_1,j_2,\ldots, j_k\}$ to form a
suitable pair $(F^{(l-k)},J^{(k)})$ for fixed $(l-k)$-flat
$F^{(l-k)}$. Hence, (\ref{alpha}) follows from the fact that there
are $N(l,l-k)\;$ $(l-k)$-flats in $F^{(l)}$.
\end{proof}

\begin{Lemma}
\label{lem10} For any $u\ge 1$ and $l\ge 0$, $G(u,l)$ satisfies
the following recursive equation
\begin{eqnarray}
\label{g1} G(u,0)&=& 0 \;\mbox{ for any } u;\quad\quad
G(u,l)= 0 \;\mbox{ for any } u\le l+1;\\
B(u,l)&=&\sum_{k=0}^{l-1} \alpha(l,k)G(u-k,l-k),\quad u\ge l+2.
\label{g2}
\end{eqnarray}
\end{Lemma}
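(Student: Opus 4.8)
The plan is to read off the vanishing statements from facts already in hand and to establish the recursion (\ref{g2}) by partitioning the set of $u$-generators of a fixed $l$-flat according to the independent set that Lemma \ref{lem8} attaches to each one. The vanishing part is immediate: (\ref{g5}) already gives $G(u,l)=0$ whenever $u\le l+1$ (in particular $G(u,0)=0$ for $u\le 1$), while for $u\ge 2$ we have $G(u,0)\le B(u,0)=0$ by (\ref{g3}) and (\ref{b1}); this accounts for the first displayed line.

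So from now on fix $l\ge 1$, $u\ge l+2$, and an $l$-flat $F^{(l)}$, and let $\mathcal{G}$ be the set of $u$-generators of $F^{(l)}$, so $|\mathcal{G}|=B(u,l)$. To each $S\in\mathcal{G}$ attach the set $J(S)=\{\,j\in S:\ j\notin\langle S\setminus\{j\}\rangle\,\}$ of Lemma \ref{lem8} and put $k(S)=|J(S)|$. Since $u\ge l+2>l+1$, $S$ is not independent (else $\langle S\rangle$ would be a $(u-1)$-flat, forcing $u=l+1$), so by Lemma \ref{lem8}(iii) the set $J(S)$ is either empty or a non-empty proper subset of $S$; Lemma \ref{lem8}(ii),(iv) then say that either $k(S)=0$ and $S$ is itself a stopping $u$-generator of $F^{(l)}$, or $1\le k(S)\le l-1$, the set $\langle S\setminus J(S)\rangle$ is an $(l-k(S))$-flat, and $S\setminus J(S)$ is a stopping $(u-k(S))$-generator of it. Hence $\mathcal{G}=\bigsqcup_{k=0}^{l-1}\mathcal{G}_k$ with $\mathcal{G}_k=\{S\in\mathcal{G}:k(S)=k\}$, and it suffices to prove $|\mathcal{G}_k|=\alpha(l,k)\,G(u-k,l-k)$ for each $k$; summing and recalling $\alpha(l,0)=1$ then yields (\ref{g2}).

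For a fixed $k$ I would set up a bijection between $\mathcal{G}_k$ and the set of triples $(F^{(l-k)},J,S')$ in which $(F^{(l-k)},J)$ is one of the $\alpha(l,k)$ pairs enumerated by Lemma \ref{lem9} (an $(l-k)$-flat $F^{(l-k)}\subseteq F^{(l)}$ and an independent $k$-set $J$ with $\langle J\cup F^{(l-k)}\rangle=F^{(l)}$) and $S'$ is one of the $G(u-k,l-k)$ stopping $(u-k)$-generators of $F^{(l-k)}$. The forward map is $S\mapsto\bigl(\langle S\setminus J(S)\rangle,\ J(S),\ S\setminus J(S)\bigr)$, which lands among the admissible triples by Lemma \ref{lem8}(i),(iv) together with the trivial identity $\langle J(S)\cup\langle S\setminus J(S)\rangle\rangle=\langle S\rangle=F^{(l)}$. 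The backward map is $(F^{(l-k)},J,S')\mapsto J\cup S'$. Both maps are visibly injective, so the real content — the step I expect to be the main obstacle — is to check that $J\cup S'\in\mathcal{G}_k$ and $J(J\cup S')=J$ for every admissible triple.

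To carry out that step: a pair $(F^{(l-k)},J)$ of Lemma \ref{lem9} automatically satisfies $J\cap F^{(l-k)}=\emptyset$, since otherwise $\langle J\cup F^{(l-k)}\rangle$ would be spanned by $F^{(l-k)}$ and at most $k-1$ further points, hence of dimension $\le l-1$, contradicting $\langle J\cup F^{(l-k)}\rangle=F^{(l)}$. So $S:=J\cup S'$ is a disjoint union (as $S'\subseteq F^{(l-k)}$) with $|S|=u$ and $\langle S\rangle=\langle J\cup\langle S'\rangle\rangle=\langle J\cup F^{(l-k)}\rangle=F^{(l)}$, i.e. $S\in\mathcal{G}$. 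For $J\subseteq J(S)$: adjoining a single point to a flat raises its dimension by $0$ or $1$, so adjoining the $k$ points of $J$ to the $(l-k)$-flat $F^{(l-k)}$ one at a time — in any order — must raise the dimension by exactly $1$ at every step, the total rise being $k$ over $k$ steps; placing a prescribed $j\in J$ last then shows $\langle S\setminus\{j\}\rangle=\langle(J\setminus\{j\})\cup F^{(l-k)}\rangle$ has dimension $l-1$, so $j\notin\langle S\setminus\{j\}\rangle$. For $J(S)\subseteq J$: if $j\in S'$ then $S\setminus\{j\}\supseteq S'\setminus\{j\}$, whence $\langle S\setminus\{j\}\rangle\supseteq\langle S'\setminus\{j\}\rangle=\langle S'\rangle=F^{(l-k)}\ni j$ because $S'$ is a stopping generator of $F^{(l-k)}$ and $j\in S'\subseteq F^{(l-k)}$; thus $j\notin J(S)$. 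Therefore $J(S)=J$, the two maps are mutually inverse, $|\mathcal{G}_k|=\alpha(l,k)\,G(u-k,l-k)$, and summing over $0\le k\le l-1$ proves (\ref{g2}).
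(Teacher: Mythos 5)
Your proof is correct and takes essentially the same route as the paper's: the vanishing cases follow from (\ref{g5}), (\ref{g3}) and (\ref{b1}), and the recursion (\ref{g2}) is obtained by classifying the $u$-generators of an $l$-flat according to the set $J$ attached by Lemma \ref{lem8} and counting via Lemma \ref{lem9}. The only difference is level of detail: you explicitly verify that $S\leftrightarrow(\langle S\setminus J\rangle,\,J,\,S\setminus J)$ is a bijection (in particular that $J\cup S'$ is a $u$-generator whose attached independent set is exactly $J$), a point the paper's proof asserts as a ``1-1 correspondence'' without further checking.
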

\begin{proof}
It is easy to check that (\ref{g1}) holds by the definition (\ref{g}) of
$G(u,l)$ and (\ref{g5}). Below we suppose $l\ge 1$ and $u\ge l+2$.
Since $u\ge l+2$, by Lemma \ref{lem8}, each $u$-generator of
$F^{(l)}$ is 1-1 corresponding to a $(u-k)$-stopping generator of
an $(l-k)$-flat of $F^{(l)}$, where $0\le k\le l-1$. For fixed
$0\le k\le l-1$, by Lemma \ref{lem9}, there are
$\alpha(l,k)G(u-k,l-k)$ such $u$-generators of $F^{(l)}$. Hence,
(\ref{g2}) follows by counting these $u$-generators where $k$ is from $0$
to $l-1$.
\end{proof}

\begin{Lemma}
\label{lemg} Let $u\ge l+2$. Then
\begin{eqnarray}
\label{gg} G(u,l)&=&\sum_{k=0}^{l-1} (-1)^k
\alpha(l,k)B(u-k,l-k),\\
\label{gpg} G_{PG}(u,l)&=& \sum_{k=0}^{l-1} B_{PG}(u-k,l-k)
\frac{(-1)^k}{k!}\prod_{i=1}^k 2^{l-i+1}(2^{l-i+2}-1),\\
\label{geg} G_{EG}(u,l)&=&\sum_{k=0}^{l-1} B_{EG}(u-k,l-k)
\frac{(-1)^k}{k!}\prod_{i=1}^k 2^{l-i+1}(2^{l-i+1}-1).
\end{eqnarray}
\end{Lemma}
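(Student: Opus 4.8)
\textbf{Proof proposal for Lemma~\ref{lemg}.}
The plan is to invert the recursion~(\ref{g2}) of Lemma~\ref{lem10}, exactly in the spirit of how Lemma~\ref{lemb} inverted~(\ref{b2}) of Lemma~\ref{lem7}. The recursion~(\ref{g2}) expresses $B(u,l)$ as a triangular sum of the $G(u-k,l-k)$ with coefficients $\alpha(l,k)$, so the natural approach is to substitute this identity into the proposed right-hand side of~(\ref{gg}) and show that all but one term cancels. First I would write, for $u\ge l+2$,
\begin{eqnarray*}
\sum_{k=0}^{l-1}(-1)^k\alpha(l,k)B(u-k,l-k)
&=&\sum_{k=0}^{l-1}(-1)^k\alpha(l,k)\sum_{t=0}^{l-k-1}\alpha(l-k,t)\,G(u-k-t,l-k-t),
\end{eqnarray*}
valid since $u-k\ge (l-k)+2$ so Lemma~\ref{lem10} applies to each inner $B$. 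Then I would collapse the double sum by setting $s=k+t$ and grouping by $s$, obtaining a single sum over $s$ from $0$ to $l-1$ of $G(u-s,l-s)$ times the inner coefficient $\sum_{k=0}^{s}(-1)^k\alpha(l,k)\,\alpha(l-k,s-k)$.

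The crux is therefore a purely combinatorial identity on the $\alpha$'s: I must show that
\begin{eqnarray*}
\sum_{k=0}^{s}(-1)^k\,\alpha(l,k)\,\alpha(l-k,s-k)
&=&\delta_{s,0}
\end{eqnarray*}
for $0\le s\le l-1$. Using the definition~(\ref{alpha}), $\alpha(l,k)=\frac{N(l,l-k)}{k!}\prod_{i=1}^{k}[N(l,0)-N(l-k+i,0)]$, one can split $\binom{s}{k}$ out of the product $\frac{1}{k!}\cdot\frac{1}{(s-k)!}=\frac{1}{s!}\binom{s}{k}$ and use the flat-counting identity $N(l,l-k)N(l-k,l-s)=N(l,l-s)\left[{s\atop k}\right]$ (this is~(\ref{Ngauss3}) rewritten, or follows directly from~(\ref{fg1})--(\ref{fg2})) to pull the common factor $N(l,l-s)/s!$ outside. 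What remains inside is a Gaussian-binomial weighted alternating sum over the product-of-differences terms, which I expect to telescope to $\delta_{s,0}$ by an application of the Cauchy Binomial Theorem~(\ref{cauchy2}); concretely, after the rescaling the inner sum should match $\sum_{k}\left[{s\atop k}\right]2^{k(k-1)/2}(-1)^k=\delta_{s,0}$. I would verify the $q=2$ specializations~(\ref{gpg}) and~(\ref{geg}) simply by substituting $\alpha_{PG}$ from~(\ref{alphapg}) and $\alpha_{EG}$ from~(\ref{alphaeg}) into~(\ref{gg}), with no further work.

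The main obstacle I anticipate is matching the product $\prod_{i=1}^{k}[N(l,0)-N(l-k+i,0)]$ appearing in $\alpha(l,k)$ against the corresponding product in $\alpha(l-k,s-k)$ after the index shift, and checking that their ratio, together with the Gaussian binomial $\left[{s\atop k}\right]$, reassembles exactly into the weight $2^{k(k-1)/2}$ needed to invoke~(\ref{cauchy2}). This is the step where one must be careful: the differences $N(l,0)-N(l-k+i,0)$ are, for $PG$, of the form $2^{l-i+1}(2^{l-i+2}-1)$ (constant in $k$!) and, for $EG$, $2^{l-i+1}(2^{l-i+1}-1)$, so in the $q=2$ cases the product depends only on $i$ and not on $k$, which is what makes the cancellation clean; for general $q$ one should work directly from~(\ref{alpha}) and the flat identities rather than the closed forms. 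Once the scalar identity $\sum_{k=0}^{s}(-1)^k\alpha(l,k)\alpha(l-k,s-k)=\delta_{s,0}$ is in hand, the lemma follows immediately. An alternative, if the direct manipulation proves unwieldy, is to run the same inversion bookkeeping already used in the proof of Lemma~\ref{lemb}: there the analogous cancellation was driven by~(\ref{Ngauss3}) and~(\ref{cauchy2}), and the present situation is structurally identical with $\binom{N(l-j,0)}{u}$ replaced by $G(u-k,l-k)$ and $N(l,l-j)$ replaced by $\alpha(l,k)$.
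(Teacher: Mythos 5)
Your overall plan coincides with the paper's proof: substitute the recursion (\ref{g2}) into the right-hand side of (\ref{gg}) (legitimate since $u-k\ge (l-k)+2$), interchange the order of summation by grouping on $s=k+t$, and reduce the lemma to the orthogonality relation $\sum_{k=0}^{s}(-1)^k\alpha(l,k)\alpha(l-k,s-k)=\delta_{s,0}$ for $0\le s\le l-1$; the range bookkeeping is right, and (\ref{gpg}), (\ref{geg}) do follow by plugging in (\ref{alphapg})--(\ref{alphaeg}).

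The flaw is in the mechanism you predict for that orthogonality relation. It is not an instance of the Cauchy identity (\ref{cauchy2}): if you carry out your own rescaling, the Gaussian factor $\left[s\atop k\right]$ produced by (\ref{Ngauss3}) is exactly cancelled by the ratio of the two products of differences, and what survives is
\begin{eqnarray*}
\alpha(l,k)\,\alpha(l-k,s-k)&=&{s\choose k}\,\alpha(l,s),
\end{eqnarray*}
so the inner sum collapses to $\alpha(l,s)\sum_{k=0}^{s}(-1)^k{s\choose k}=\delta_{s,0}$ by the ordinary binomial theorem --- no weight $2^{k(k-1)/2}$ and no appeal to (\ref{cauchy2}) appears. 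This is exactly what the paper does: it verifies $\alpha(l,k)\alpha(l-k,j-k)={j\choose k}\alpha(l,j)$ from the closed forms (\ref{alphapg})--(\ref{alphaeg}) and finishes with $\sum_{k=0}^{j}(-1)^k{j\choose k}=\delta_{j,0}$. Relatedly, your parenthetical identification of the differences $N(l,0)-N(l-k+i,0)$ with the factors $2^{l-i+1}(2^{l-i+2}-1)$ is incorrect: in $PG$ these differences equal $2^{l+1}-2^{l-k+i+1}$ and do depend on $k$; the $k$-independent factors of (\ref{alphapg}) only appear after combining the product of differences with the flat count $N(l,l-k)$ from (\ref{alpha}). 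None of this breaks your reduction --- the scalar identity you isolate is true and, once the correct (and simpler) cancellation is used, your outline becomes the paper's proof --- but as written the decisive step invokes the wrong identity and would not go through verbatim.
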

\begin{proof}
It is easy to check by (\ref{alphapg})-(\ref{alphaeg}) that
\begin{eqnarray*}
\alpha(l,0)&=& 1,\\ \alpha(l,k)\alpha(l-k,j-k) &=& {j\choose
k}\alpha(l,j).
\end{eqnarray*}
Clearly, $\sum_{k=0}^{j} (-1)^k{j\choose k} = \delta_{j,0}$.
Moreover, by Lemma \ref{lem10},
\begin{eqnarray*}
B(u-k,l-k)&=&\sum_{j=0}^{l-k-1}
\alpha(l-k,j)G(u-k-j,l-k-j)\\
&=&\sum_{j=k}^{l-1} \alpha(l-k,j-k)G(u-j,l-j).
\end{eqnarray*}
Hence, using these equations, we have
\begin{eqnarray*}
&&\sum_{k=0}^{l-1} (-1)^k
\alpha(l,k)B(u-k,l-k)\\
&=& \sum_{k=0}^{l-1}\sum_{j=k}^{l-1} (-1)^k \alpha(l,k)
\alpha(l-k,j-k)G(u-j,l-j)\\
&=& \sum_{j=0}^{l-1} \sum_{k=0}^{j} (-1)^k
{j\choose k}\alpha(l,j)G(u-j,l-j)\\
&=& \sum_{j=0}^{l-1} \delta_{j,0}\alpha(l,j)G(u-j,l-j) \\
&=& \alpha(l,0)G(u,l)=G(u,l).
\end{eqnarray*}
Moreover, (\ref{gpg}) and (\ref{geg}) follow from (\ref{gg}) and
(\ref{alphapg})-(\ref{alphaeg}).
\end{proof}

\section{Stopping Set Distributions}

In this section, we determine the SSDs for the Simplex codes $\mathcal{S}(m)$, the Hamming codes $\mathcal{H}(m)$, the first order
Reed-Muller codes $RM(m,1)$ and the extended Hamming codes $\hat \mathcal{H}(m)$ with the BEC-optimal parity-check matrices $H^{(1)}, H^{(2)}, H^{(3)}, H^{(4)}$, respectively.

\subsection{Simplex Codes $\mathcal{S}(m)$}

Throughout this subsection, $n=2^m-1$ and
$PG(m-1,2)=\{1,2,\ldots,2^m-1\}$. By (\ref{fg2}),
there are $N_{PG}(m-1,\mu)$ $\;\mu$-flats in $PG(m-1,2)$ and a
$\mu$-flat has exactly $2^{\mu+1}-1$ points. The next theorem
follows from Lemma \ref{lems1} immediately.

\begin{Theorem}
\label{ths2} Let $\mathcal{S}(m)$ be the $[2^m-1,m,2^{m-1}]$
Simplex code with parity-check matrix $H^{(1)}$. Let
$\{T_i(H^{(1)})\}_{i=0}^n$ be the SSD of $\mathcal{S}(m)$. Then
\begin{eqnarray*}
T_i (H^{(1)}) = \left\{ \begin{array}{lll} 1 &
\mbox{if }\;i=0 \mbox{ or } 2^m-1,\\
N_{PG}(m-1,\mu), &\mbox{if }\;
i=2^m-2^{\mu+1}, \\
&\quad \mu=0,\ldots,m-2,\!\!\!\!\!\!\\
0, & \mbox{otherwise},
\end{array} \right.
\!\!\!\!\!\!
\end{eqnarray*}
where $$N_{PG}(m-1,\mu)= \prod_{i=0}^{\mu}
\frac{2^{m-i}-1}{2^{\mu-i+1}-1}.$$
\end{Theorem}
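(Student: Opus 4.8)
The plan is to deduce Theorem~\ref{ths2} directly from the structural characterization in Lemma~\ref{lems1}, which states that a subset $S\subseteq PG(m-1,2)$ is a stopping set of $H^{(1)}$ if and only if $S=PG(m-1,2)$ or its complement $\bar S=PG(m-1,2)\setminus S$ is a flat of $PG(m-1,2)$. So counting stopping sets of a given size reduces to counting flats of a given size.

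First I would handle the two degenerate cases: $S=PG(m-1,2)$ gives exactly one stopping set of size $n=2^m-1$, and the empty set (with $\bar S=PG(m-1,2)$, which is the $(m-1)$-flat, hence a flat) gives the one stopping set of size $0$; these account for $T_0=T_{2^m-1}=1$. For the remaining stopping sets we need $\bar S$ to be a proper non-empty flat, i.e. a $\mu$-flat with $0\le\mu\le m-2$. Such a $\mu$-flat has exactly $2^{\mu+1}-1$ points by (\ref{fg2}), so $|S|=n-|\bar S|=(2^m-1)-(2^{\mu+1}-1)=2^m-2^{\mu+1}$. Conversely, for each $\mu$ in this range and each of the $N_{PG}(m-1,\mu)$ distinct $\mu$-flats of $PG(m-1,2)$, the complement is a stopping set of that size. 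Since the map $S\mapsto\bar S$ is a bijection, distinct $\mu$-flats yield distinct stopping sets, and a stopping set of size $2^m-2^{\mu+1}$ forces $|\bar S|=2^{\mu+1}-1$, which pins down the dimension of the flat uniquely; hence $T_{2^m-2^{\mu+1}}(H^{(1)})=N_{PG}(m-1,\mu)$.

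Finally, I would observe that the values $\{2^m-2^{\mu+1}:\mu=0,\ldots,m-2\}$ together with $0$ and $2^m-1$ are pairwise distinct, so there is no double-counting across the cases, and any size $i$ not of one of these forms admits no stopping set, giving $T_i(H^{(1)})=0$ otherwise. The explicit product formula for $N_{PG}(m-1,\mu)$ is just (\ref{fg2}) specialized to $q=2$ with $\mu_2=m-1$, $\mu_1=\mu$.

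There is essentially no obstacle here: the theorem is an immediate bookkeeping consequence of Lemma~\ref{lems1} and the point-count of a flat. The only point requiring a moment's care is verifying that the three families of sizes ($0$; $2^m-1$; and $2^m-2^{\mu+1}$ for $0\le\mu\le m-2$) do not overlap, so that the piecewise definition of $T_i(H^{(1)})$ is well-posed — but this is clear since $2^m-2^{\mu+1}$ is strictly increasing in $\mu$ and lies strictly between $0$ and $2^m-1$.
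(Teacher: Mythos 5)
Your proposal is correct and matches the paper's argument: the paper likewise derives Theorem~\ref{ths2} immediately from Lemma~\ref{lems1} by identifying the proper stopping sets with complements of $\mu$-flats ($0\le\mu\le m-2$) and counting them via $N_{PG}(m-1,\mu)$, with the sizes $2^m-2^{\mu+1}$ coming from the point count $2^{\mu+1}-1$ of a $\mu$-flat. Your additional checks (bijectivity of complementation and non-overlap of the size values) are exactly the routine bookkeeping the paper leaves implicit.
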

\begin{Remark}
Let $\mu=m-2$, by Theorem {\rm \ref{ths2}}, it is easy to check
that the number of smallest stopping sets
$T_{2^{m-1}}(H^{(1)})= 2^{m}-1,$ which coincides with the
number of minimum codewords of $\mathcal{S}(m)$.
\end{Remark}

\begin{Example}
\emph{By Theorem \ref{ths2}, we can easily calculate the SSDs
of $\mathcal{S}(m)$ with parity-check matrix $H^{(1)}$ by
\emph{Mathematica} software. Here are some examples for
$m=3,4,5$.\\
For $\mathcal{S}(3)$,
\begin{eqnarray*}
T(x)&=&1+7x^4+7x^6+x^7.
\end{eqnarray*}
For $\mathcal{S}(4)$,
\begin{eqnarray*}
T(x)&=&1+15x^8+35x^{12}+15x^{14}+x^{15}.
\end{eqnarray*}
For $\mathcal{S}(5)$,
\begin{eqnarray*}
T(x)&=&1+31x^{16}+155x^{24} +155x^{28}+31x^{30}+x^{31}.
\end{eqnarray*}
}
\end{Example}

It is worthy to note that all examples in this section besides the above one
are calculated through two ways, one of which uses the derived formula,
and the other of which uses the exhaust computer search for verification.

\subsection{Hamming Codes $\mathcal{H}(m)$}

Throughout this subsection, $n=2^m-1$ and
$PG(m-1,2)=\{1,2,\ldots,2^m-1\}$. Note that
$H^{(2)}=H^{(2)*}$ and $P$ is a hyperplane if and only if
$\chi(\bar P)$ is a row of $H^{(2)}$.

\begin{Lemma}
\label{lemh1} Let $\mathcal{H}(m)$ be the $[2^m-1,2^m-m-1,3]$
Hamming code with parity-check matrix $H^{(2)}$. Then $S\subseteq
PG(m-1,2)$ is a non-empty stopping set if and only if $\langle
S\rangle=\langle S_j\rangle$ for any $j\in S$, where
$S_j=S\setminus\{j\}$.
\end{Lemma}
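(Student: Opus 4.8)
The plan is to translate the stopping-set condition for $H^{(2)}$ into a geometric statement about $PG(m-1,2)$, exactly as was done for the Simplex code in Lemma~\ref{lems1}, but now using the rows $\chi(\bar P)$ (complements of hyperplanes) instead of the lines. Recall that $S$ is a stopping set of $H^{(2)}$ if and only if $|\bar P\cap S|\neq 1$ for every hyperplane $P$ of $PG(m-1,2)$. The key observation to exploit is the incidence count: for a hyperplane $P$, we have $|\bar P\cap S| = |S| - |P\cap S|$, and since $P$ has $2^{m-1}-1$ points while $PG(m-1,2)$ has $2^m-1$ points, $\bar P$ has exactly $2^{m-1}$ points. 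So the condition ``$|\bar P\cap S|\ne 1$ for all hyperplanes $P$'' becomes ``$|P\cap S|\ne |S|-1$ for all hyperplanes $P$'', i.e. no hyperplane misses exactly one point of $S$.

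Next I would reformulate this in terms of the span operator $\langle\cdot\rangle$. The condition ``some hyperplane $P$ satisfies $|P\cap S| = |S|-1$'' says precisely that there is a point $j\in S$ with $S_j = S\setminus\{j\}\subseteq P$ but $j\notin P$; equivalently, there is a hyperplane containing $\langle S_j\rangle$ but not $j$, which (since every flat of dimension $\le m-2$ is contained in some hyperplane, and a hyperplane containing $\langle S_j\rangle$ avoids $j$ iff $j\notin\langle S_j\rangle$) is equivalent to $j\notin\langle S_j\rangle$, i.e. $\langle S_j\rangle\subsetneq\langle S\rangle$. Therefore $S$ fails to be a stopping set iff there exists $j\in S$ with $\langle S_j\rangle\neq\langle S\rangle$; negating, $S$ is a stopping set iff $\langle S\rangle=\langle S_j\rangle$ for every $j\in S$. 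That is the claim. I should also dispose of the degenerate case $|S|=1$: then $S_j=\emptyset$, $\langle\emptyset\rangle$ is not equal to $\langle S\rangle$ (a point), so the right-hand condition fails; and indeed a singleton is not a stopping set since any hyperplane $\bar P$ through that point meets $S$ in one element — consistent.

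The main obstacle, and the step I would be most careful about, is the geometric equivalence ``$\exists$ hyperplane containing $\langle S_j\rangle$ but not $j$'' $\iff$ ``$j\notin\langle S_j\rangle$''. The forward direction is immediate. For the reverse, if $j\notin\langle S_j\rangle$ then $\langle S_j\rangle$ is a flat of dimension at most $m-2$, hence extends to a hyperplane $P$; but one must check $P$ can be chosen to avoid $j$ — this follows because $\langle S_j\cup\{j\}\rangle=\langle S\rangle$ has dimension one more than $\langle S_j\rangle$, and the hyperplanes through $\langle S_j\rangle$ partition the points outside it among themselves in such a way that not all pass through $j$ (concretely, $\langle S_j\rangle$ is an intersection of $A(m-2,\dim\langle S_j\rangle)\ge 2$ hyperplanes, and they cannot all contain $j$ or their intersection would too). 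Once that point is settled, the rest is bookkeeping with the incidence identity $|\bar P\cap S|=|S|-|P\cap S|$ and the three-point structure is not even needed here, only that hyperplanes are the codimension-one flats.
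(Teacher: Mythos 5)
Your proposal is correct and follows essentially the same route as the paper: both reduce the stopping-set condition to ``$|P\cap S|\ne|S|-1$ for every hyperplane $P$'' and then show that failure of $\langle S_j\rangle=\langle S\rangle$ is equivalent to the existence of a hyperplane containing $\langle S_j\rangle$ but avoiding $j$ (the paper gets this via the count $A(m-2,d_j)>A(m-2,d)$ plus the separate case $\dim\langle S\rangle=m-1$, you via the fact that a proper flat is the intersection of the hyperplanes containing it). One trivial slip: when $\dim\langle S_j\rangle=m-2$ you have $A(m-2,m-2)=1$ rather than $\ge 2$, but then $\langle S_j\rangle$ is itself the required hyperplane, so your argument is unaffected.
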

\begin{proof}
By the definition of stopping sets, a non-empty subset $S\subseteq
PG(m-1,2)$ is a stopping set if and only if $H^{(2)}(S)$ has no
rows with weight one, i.e., $|\bar P\cap S|\ne 1$ for any
hyperplane $P$ of $PG(m-1,2)$. Clearly, $|\bar P\cap S|\ne 1$ is
equivalent to $|P\cap S|\ne |S|-1$. Hence, we only need to show
that $|P\cap S|\ne |S|-1$ for any hyperplane $P$ of $PG(m-1,2)$ if
and only if $\langle S\rangle=\langle S_j\rangle$ for any $j\in
S$.

Firstly, we will prove the necessary condition. Suppose that $S$
satisfies $|P\cap S|\ne |S|-1$ for any hyperplane $P$. Clearly,
$\langle S_j\rangle\subseteq\langle S\rangle$. Assume by contrary
that there exists $j\in S$ such that $\langle
S_j\rangle\subset\langle S\rangle$, i.e., $d_j=d-1$, where $d_j$
and $d$ are the dimensions of $\langle S_j\rangle$ and $\langle
S\rangle$ respectively. If $d=m-1$, then $\langle S_j\rangle$ is a
hyperplane not including $S$, i.e., $|\langle S_j\rangle\cap
S|=|S_j|= |S|-1$, which leads a contradiction. Otherwise, if
$d<m-1$, by (\ref{fg3}), there are $A(m-2, d)$ hyperplanes
containing $S$, and there are $A(m-2, d_j)$ hyperplanes containing
$S_j$. It is easy to check that for $PG(m-1,2)$
\begin{eqnarray*}
\frac{A(m-2, d_j)}{A(m-2, d)}=\prod_{i=d_j+1}^{d}
\frac{2^{m-i}-1}{2^{m-i-1}-1}=\frac{2^{m-d}-1}{2^{m-d-1}-1}>1,
\end{eqnarray*}
which implies that there exists a hyperplane, say $P^*$, such that
$S_j \subseteq P^*$ and $S \not \subseteq P^*$. Hence, $|P^*\cap
S|= |S|-1$, which leads a contradiction.

On the other hand, suppose that $S$ satisfies $\langle
S\rangle=\langle S_j\rangle$ for any $j\in  S$. Assume by contrary
that there exists a hyperplane $P^*$ such that $|P^*\cap S|=
|S|-1$, i.e., there exists a point $j^*\in S$ such that $P^*\cap
S=S_{j^*}$. Then $S_{j^*}\subseteq P^*$ and $S\not\subseteq P^*$,
i.e., $\langle S_{j^*}\rangle \subseteq P^*$ and $\langle S\rangle
\not\subseteq P^*$, which leads a contradiction.

Combining these claims, the lemma follows.
\end{proof}

\begin{Remark}
\label{rh} It is easy to see from Lemma $\ref{lemh1}$ that when
$u\ge 2^{m-1}+1$, any $u$-set is a stopping set since any set with
at least $2^{m-1}$ points generates $PG(m-1,2)$.
\end{Remark}

\begin{Theorem}
\label{thh2} Let $\mathcal{H}(m)$ be the $[2^m-1,2^m-m-1,3]$
Hamming code with parity-check matrix $H^{(2)}$. Let
$\{T_i(H^{(2)})\}_{i=0}^n$ be the SSD of $\mathcal{H}(m)$. Then
\begin{eqnarray}
\label{thh2-1}
T_u (H^{(2)}) = \left\{ \begin{array}{ll} 1, &u=0,\\
0, &u=1,2,\\
\sum_{l=\lfloor\log u\rfloor}^{\min\{u-2,m-1\}}&\!\!\!\!\!\! N_{PG}(m-1,l)\;G_{PG}(u,l), \\
&u=3,\ldots,2^{m-1},\\
{2^m-1\choose u}, & u=2^{m-1}+1,\ldots, 2^m-1,\!\!\!\!\!\!
\end{array} \right.
\end{eqnarray}
where $N_{PG}(m-1,l)$ and $G_{PG}(u,l)$ are defined in {\rm
(\ref{fg2})} and {\rm(\ref{gpg})} respectively.
\end{Theorem}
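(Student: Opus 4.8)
The plan is to treat the four ranges of $u$ in the statement separately, each time reducing the counting to results already available. The value $u=0$ is the convention $T_0(H^{(2)})=1$. For $u=1,2$ I would invoke the fact recalled in the Introduction (from \cite{sv}) that the stopping distance of any parity-check matrix of $\mathcal{H}(m)$ equals its minimum distance $3$, so $H^{(2)}$ has no non-empty stopping set of size $1$ or $2$; one can also see this straight from Lemma~\ref{lemh1}, since a one- or two-point set $S$ fails $\langle S\rangle=\langle S_j\rangle$ for a suitable $j$. For $u=2^{m-1}+1,\dots,2^m-1$ I would use Remark~\ref{rh}: a hyperplane of $PG(m-1,2)$ has only $2^{m-1}-1$ points, so any set of at least $2^{m-1}$ points generates $PG(m-1,2)$; hence if $|S|\ge 2^{m-1}+1$ then $\langle S_j\rangle=PG(m-1,2)=\langle S\rangle$ for every $j\in S$, and Lemma~\ref{lemh1} makes every such $u$-set a stopping set, giving $T_u(H^{(2)})={2^m-1\choose u}$.

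The heart of the proof is the middle range $3\le u\le 2^{m-1}$. The key point is that the criterion of Lemma~\ref{lemh1} --- that $S$ be non-empty with $\langle S_j\rangle=\langle S\rangle$ for every $j\in S$ --- is exactly the definition, introduced in Section~IV, of $S$ being a stopping $u$-generator of the flat $F=\langle S\rangle$. Since $\langle S\rangle$ is uniquely determined by $S$, the set of stopping $u$-sets of $H^{(2)}$ splits into disjoint classes according to the flat $\langle S\rangle$: for a fixed dimension $l$, each $l$-flat of $PG(m-1,2)$ has exactly $G_{PG}(u,l)$ stopping $u$-generators, and there are $N_{PG}(m-1,l)$ such $l$-flats. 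Summing over all $l$ yields $T_u(H^{(2)})=\sum_l N_{PG}(m-1,l)\,G_{PG}(u,l)$.

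Finally I would pin down the range of $l$ contributing nonzero terms. Since an $l$-flat has $2^{l+1}-1$ points, there is no $u$-subset of it once $2^{l+1}-1<u$, so $B_{PG}(u,l)=0$ and then $G_{PG}(u,l)=0$ by (\ref{g3}); the smallest $l$ with $2^{l+1}-1\ge u$ is $\lfloor\log u\rfloor$, which amounts to the elementary identity $\lceil\log(u+1)\rceil-1=\lfloor\log u\rfloor$. At the other end, (\ref{g1}) --- equivalently (\ref{g5}) --- gives $G_{PG}(u,l)=0$ unless $u\ge l+2$, i.e. $l\le u-2$, while trivially $l\le m-1$ inside $PG(m-1,2)$. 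Hence only $\lfloor\log u\rfloor\le l\le\min\{u-2,m-1\}$ contribute, which gives (\ref{thh2-1}); the closed forms for $N_{PG}(m-1,l)$ and $G_{PG}(u,l)$ are furnished by (\ref{fg2}) and Lemma~\ref{lemg}. The only thing requiring care is the bookkeeping of these summation limits and the check that the boundary value $u=2^{m-1}$ is still covered by the generator count rather than by Remark~\ref{rh}; with that, the theorem is an assembly of Lemma~\ref{lemh1}, Remark~\ref{rh}, and the counting results of Section~IV.
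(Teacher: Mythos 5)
Your proposal is correct and follows essentially the same route as the paper: Lemma~\ref{lemh1} identifies non-empty stopping sets of $H^{(2)}$ with stopping generators of the flat they span, the decomposition by the flat $\langle S\rangle$ gives $T_u=\sum_l N_{PG}(m-1,l)\,G_{PG}(u,l)$, the limits $\lfloor\log u\rfloor\le l\le\min\{u-2,m-1\}$ come from the flat-size bound and (\ref{g5}), and Remark~\ref{rh} handles $u\ge 2^{m-1}+1$. The only cosmetic difference is that the paper obtains $T_1=T_2=0$ by noting the general sum is empty for $u=1,2$, whereas you argue it via the stopping distance (or Lemma~\ref{lemh1} directly), which is equally valid.
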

\begin{proof}
Clearly, $T_0=1$. By Lemma \ref{lemh1} and the definition
(\ref{g}) of $G_{PG}(u,l)$, it is easy to see that
\begin{eqnarray}
\label{thh2-2}T_u  =\sum_{l=0}^{m-1} N_{PG}(m-1,l)\;G_{PG}(u,l).
\end{eqnarray}
Since any $u$-set in $PG(m-1,2)$ generates a flat with dimension
at least $\lfloor \log u \rfloor$,
\begin{eqnarray}
\label{thh2-3} B_{PG}(u,l)=G_{PG}(u,l)=0 \quad \mbox{if }l<\lfloor
\log u \rfloor.
\end{eqnarray}
Combining (\ref{thh2-2}), (\ref{thh2-3}) and (\ref{g5}), we have
that
\begin{eqnarray}
\label{thh2-4} T_u  =\!\!\!\!\sum_{l=\lfloor\log
u\rfloor}^{\min\{u-2,m-1\}} N_{PG}(m-1,l)\;G_{PG}(u,l), \;1\le u
\le 2^m-1.
\end{eqnarray}
Let $u=1,2$, we have $T_1=T_2=0$. Combining these facts and Remark
\ref{rh}, (\ref{thh2-1}) follows.
\end{proof}

\begin{Remark}
By Theorem {\rm \ref{thh2}}, we have that
\begin{eqnarray*}
T_3(H^{(2)})&=&(2^m-1)(2^{m-1}-1)/3,\\
T_4(H^{(2)})&=&(2^m-1)(2^{m-1}-1)(2^{m-2}-1)/3.
\end{eqnarray*}
It is easy to see from {\rm \cite{ms}} that $A_3=T_3(H^{(2)})$ and
$A_4=T_4(H^{(2)})$ for $\mathcal{H}(m)$, which were also obtained
by Weber and Abdel-Ghaffar {\rm \cite{wa}}.
\end{Remark}

\begin{Example}
\emph{By Theorem \ref{thh2}, we can easily calculate the SSDs for
$\mathcal{H}(m)$ by \emph{Mathematica} software. Here are some
examples for $m=3,4,5$.}\\
\emph{For} $\mathcal{H}(3)$,
\begin{eqnarray*}
T(x)&=&1+7x^3+7x^4+21x^5+7x^6+x^7.
\end{eqnarray*}
\emph{For} $\mathcal{H}(4)$,
\begin{eqnarray*}
T(x)&=&1+35x^3+105x^4+483x^5+2485x^6+5595x^7+6315x^8\\
&&+5005x^9+3003x^{10}
+1365x^{11}+455x^{12}+105x^{13}+15x^{14}+x^{15}.
\end{eqnarray*}
\emph{For} $\mathcal{H}(5)$,
\begin{eqnarray*}
T(x)&=&1+155x^3+1085x^4+8463x^5+88573x^6+798095x^7\\
&&+4909005x^8+16998075x^9+41869685x^{10}+83182827x^{11}\\
&&+140443485x^{12}+206027395x^{13}+265130445x^{14}+300532755x^{15}\\
&&+300539699x^{16}+265182525x^{17}+206253075x^{18}
+141120525x^{19}\\
&&+84672315x^{20}+44352165x^{21}+20160075x^{22}+7888725x^{23}\\
&&+2629575x^{24}
+736281x^{25}+169911x^{26}+31465x^{27}+4495x^{28}\\
&&+465x^{29}+31x^{30}+x^{31}.
\end{eqnarray*}
\end{Example}

\subsection{The First Order Reed-Muller Codes $RM(m,1)$}

Throughout this subsection, $n=2^m$ and
$EG(m,2)=\{1,2,\ldots,2^m\}$. By (\ref{fg1}), there
are $N_{EG}(m,\mu)$ $\;\mu$-flats in $EG(m,2)$ and a $\mu$-flat
has exactly $2^\mu$ points. The next theorem follows from Lemma
\ref{lemrm1} immediately.

\begin{Theorem}
\label{thrm2} Let $RM(m,1)$ be the first order Reed-Muller code
with parity-check matrix $H^{(3)}$. Let
$\{T_i(H^{(3)})\}_{i=0}^n$ be the SSD of $RM(m,1)$. Then
\begin{eqnarray}
T_i (H^{(3)}) = \left\{ \begin{array}{lll} &1 &
\mbox{if }\;i=0 \mbox{ or } 2^m,\\
&N_{EG}(m,\mu), &\mbox{if }
\;i=2^m-2^\mu,\\
&&\quad\mu=0,1,\ldots,m-1,\!\!\!\!\\
&0, & \mbox{otherwise},
\end{array} \right.
\end{eqnarray}
where
\begin{eqnarray*}
N_{EG}(m,\mu)= 2^{m-\mu} \prod_{i=1}^{\mu}
\frac{2^{m-i+1}-1}{2^{\mu-i+1}-1}.
\end{eqnarray*}
\end{Theorem}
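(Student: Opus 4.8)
The plan is to mirror the structure of Theorem~\ref{ths2} for Simplex codes, using Lemma~\ref{lemrm1} as the combinatorial dictionary. By Lemma~\ref{lemrm1}, a subset $S\subseteq EG(m,2)$ is a stopping set of $H^{(3)}$ if and only if $S=EG(m,2)$ or the complement $\bar S=EG(m,2)\setminus S$ is a flat of $EG(m,2)$. So first I would set up the bijection between stopping sets $S\neq EG(m,2)$ and flats of $EG(m,2)$ via $S\mapsto \bar S$. Under this bijection a stopping set $S$ with $\bar S$ a $\mu$-flat has size $|S|=2^m-|\bar S|=2^m-2^\mu$, since a $\mu$-flat of $EG(m,2)$ has exactly $2^\mu$ points (stated at the start of the subsection, from~(\ref{fg1})). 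The empty flat (a $(-1)$-flat, if one likes) corresponds to $S=EG(m,2)$, which is the remaining stopping set of size $2^m$; alternatively one just notes separately that $S=EG(m,2)$ is a stopping set and contributes $1$ to $T_{2^m}$.

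Next I would count: for each $\mu\in\{0,1,\ldots,m-1\}$ the number of $\mu$-flats of $EG(m,2)$ is $N_{EG}(m,\mu)$ by~(\ref{fg1}), hence $T_{2^m-2^\mu}(H^{(3)})=N_{EG}(m,\mu)$. The value $\mu=m$ would give the whole space $EG(m,2)$ itself as $\bar S$, i.e.\ $S=\emptyset$, contributing $T_0=1$; combined with the $S=EG(m,2)$ case this yields the two boundary values $T_0(H^{(3)})=T_{2^m}(H^{(3)})=1$. For any size $i$ that is not of the form $2^m-2^\mu$ with $0\le\mu\le m-1$ and not $0$ or $2^m$, there is simply no stopping set of that size (its complement would have to be a flat, forcing $|{\bar S}|$ to be a power of two in $\{1,\ldots,2^{m-1}\}$ or $0$ or $2^m$), so $T_i(H^{(3)})=0$. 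Finally I would record the explicit closed form $N_{EG}(m,\mu)=2^{m-\mu}\prod_{i=1}^{\mu}\frac{2^{m-i+1}-1}{2^{\mu-i+1}-1}$, which is just~(\ref{fg1}) specialized to $q=2$, $\mu_2=m$, $\mu_1=\mu$ (equivalently~(\ref{Ngauss1})).

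There is really no hard part here: the theorem is an immediate corollary of Lemma~\ref{lemrm1} together with the elementary facts that a $\mu$-flat of $EG(m,2)$ has $2^\mu$ points and that there are $N_{EG}(m,\mu)$ of them. The only point requiring the tiniest bit of care is bookkeeping at the two ends of the range—making sure the stopping set $EG(m,2)$ itself (whose complement is empty, not a positive-dimensional flat) is accounted for, and that $\mu$ runs exactly over $0,\ldots,m-1$ for the intermediate values—but this is routine. I expect the proof to be one or two sentences, exactly as the paper signals by saying ``The next theorem follows from Lemma~\ref{lemrm1} immediately.''
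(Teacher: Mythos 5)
Your proposal is correct and is exactly the argument the paper intends: it states that Theorem~\ref{thrm2} follows immediately from Lemma~\ref{lemrm1}, and your complement-bijection with $\mu$-flats (of size $2^\mu$, counted by $N_{EG}(m,\mu)$ from~(\ref{fg1})), plus the boundary bookkeeping for $S=\emptyset$ and $S=EG(m,2)$, is precisely that omitted routine step.
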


\begin{Remark}
Let $\mu=m-1$, by Theorem {\rm \ref{thrm2}}, it is easy to check
that the number of smallest stopping sets $T_{2^{m-1}}(H^{(3)})= 2^{m+1}-2,$
which coincides with the
number of minimum codewords of $RM(m,1)$.
\end{Remark}

\begin{Example}
\emph{By Theorem \ref{thrm2}, we can easily calculate the SSDs of $RM(m,1)$
with parity-check matrix $H^{(3)}$ by \emph{Mathematica} software.
Here are some examples for $m=3,4$.\\
For $RM(3,1)$,
\begin{eqnarray*}
T(x)&=&1+14x^4+28x^6+8x^7+x^8.
\end{eqnarray*}
For $RM(4,1)$,
\begin{eqnarray*}
T(x)&=&1+30x^8+140x^{12}+120x^{14}+16x^{15}+x^{16}.
\end{eqnarray*}
}
\end{Example}

\subsection{The Extended Hamming Codes $\hat \mathcal{H}(m)$}

Throughout this subsection, $n=2^m$ and
$EG(m,2)=\{1,2,\ldots,2^m\}$. Note that $P$ is a
hyperplane if and only if $\chi(P)$ is a row of $H^{(4)}$, and if
and only if $\bar P=EG(m,2)\setminus P$ is a hyperplane.

\begin{Lemma}
\label{lemeh1} Let $\hat\mathcal{H}(m)$ be the $[2^m,2^m-m-1,4]$
extended Hamming code with parity-check matrix $H^{(4)}$. Then
$S\subseteq EG(m,2)$ is a non-empty stopping set if and only if
$\langle S\rangle=\langle S_j\rangle$ for any $j\in S$.
\end{Lemma}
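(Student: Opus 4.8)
The plan is to mimic closely the proof of Lemma \ref{lemh1}, since the parity-check matrix $H^{(4)}$ for $\hat{\mathcal{H}}(m)$ is the point–hyperplane incidence matrix of $EG(m,2)$, and the combinatorics of hyperplanes in $EG(m,2)$ runs parallel to that of hyperplanes in $PG(m-1,2)$. First I would unwind the definition of a stopping set: $S\subseteq EG(m,2)$ is a stopping set of $H^{(4)}$ if and only if $H^{(4)}(S)$ has no row of weight one, i.e.\ $|P\cap S|\neq 1$ for every hyperplane $P$ of $EG(m,2)$. Here I must be slightly careful, because unlike the $PG$ case the rows of $H^{(4)}$ are indexed by \emph{all} hyperplanes $P$ (both $P$ and its complement $\bar P$ are hyperplanes of $EG(m,2)$), so the condition is genuinely $|P\cap S|\neq 1$ for all hyperplanes $P$, with no complementation trick needed.

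Next I would translate the geometric condition $|P\cap S|\neq 1$ for all hyperplanes $P$ into the algebraic condition $\langle S\rangle=\langle S_j\rangle$ for all $j\in S$. The key geometric fact is: a hyperplane $P$ either contains a given point or misses it, and the number of hyperplanes through a given flat $F^{(d)}$ is $A(m-1,d)$, given by (\ref{fg3}) (equivalently (\ref{Ngauss2})). For the forward direction, suppose some $j\in S$ has $\langle S_j\rangle\subsetneq\langle S\rangle$, so $\dim\langle S_j\rangle = \dim\langle S\rangle - 1 =: d-1$. If $d=m-1$ then $\langle S_j\rangle$ is itself a hyperplane containing $S_j$ but not $S$, giving $|P\cap S|=|S|-1$; we then need $|S|-1\neq 1$, i.e.\ to exclude $|S|=2$ — but if $|S|=2$ then $\langle S\rangle$ is a line and $\langle S_j\rangle$ a point, which is a hyperplane only if $m=2$, and that small case is easily checked directly (or one notes $|S|-1=1$ would force a weight-one row, contradiction with $S$ being a stopping set — actually this is precisely what we want to derive, so handle it by the counting argument uniformly instead). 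If $d<m-1$, I would compare $A(m-1,d-1)$ with $A(m-1,d)$ using (\ref{fg3}): the ratio is $\prod_{i=d}^{d}\frac{2^{m-i}-1}{2^{0}-1}$ appropriately, which is $>1$, so strictly more hyperplanes contain $S_j$ than contain $S$, hence there is a hyperplane $P^*$ with $S_j\subseteq P^*$, $S\not\subseteq P^*$, giving $|P^*\cap S|=|S|-1$, and I then observe that $|S|-1\geq 2$ automatically here since $d<m-1$ forces $|S|\geq 3$ (a set generating a flat of dimension $\leq m-2$ that fails the independence-of-removal condition must have at least $3$ points). For the converse, if $\langle S\rangle=\langle S_j\rangle$ for all $j$ but some hyperplane $P^*$ has $|P^*\cap S|=|S|-1$, then $P^*\cap S=S_{j^*}$ for a unique $j^*$, so $\langle S_{j^*}\rangle\subseteq P^*$ while $\langle S\rangle\not\subseteq P^*$ (since $j^*\notin P^*$), contradicting $\langle S\rangle=\langle S_{j^*}\rangle$.

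The main obstacle, as usual in these incidence arguments, is the bookkeeping of small cases and the precise form of the Euclidean-geometry ratio $A(m-1,d-1)/A(m-1,d)$; in $EG(m,2)$ one must make sure the formula (\ref{fg3}) is being applied with the correct ambient dimension (hyperplanes are $(m-1)$-flats in an $m$-dimensional geometry, so $\mu_2=m-1$), and verify the ratio is $\frac{2^{m-d}-1}{2^{m-d-1}-1}>1$ for $d\le m-2$. A second subtlety is whether one needs a separate argument that $|S\cap P|\ne 1$ also covers the case $|S\cap P| = |S| - 1$ automatically — but since in $EG(m,2)$ the complement of a hyperplane is again a hyperplane, $|S\cap P|\ne 1$ for \emph{all} hyperplanes already implies $|S\cap \bar P|\ne 1$, i.e.\ $|S\cap P|\ne |S|-1$, so the two conditions are equivalent and no extra work is required. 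Once these points are dispatched, the lemma follows exactly as Lemma \ref{lemh1} does, and I would simply write ``By the same argument as in the proof of Lemma~\ref{lemh1}, using that the complement of a hyperplane of $EG(m,2)$ is again a hyperplane and the ratio $A(m-1,d-1)/A(m-1,d)>1$, the lemma follows.''
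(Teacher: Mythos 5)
Your overall route is the same as the paper's: note that the rows of $H^{(4)}$ are the incidence vectors of all hyperplanes of $EG(m,2)$, that the complement of a hyperplane is again a hyperplane, hence the stopping-set condition $|P\cap S|\ne 1$ for all hyperplanes is equivalent to $|P\cap S|\ne |S|-1$ for all hyperplanes, and then rerun the argument of Lemma~\ref{lemh1} with Euclidean hyperplane counts; the paper does exactly this and simply cites the proof of Lemma~\ref{lemh1}. However, your transfer of the dimension bookkeeping from $PG(m-1,2)$ to $EG(m,2)$ is off by one, and as written the forward direction has a hole. In $EG(m,2)$ hyperplanes are $(m-1)$-flats of an $m$-dimensional geometry, so the exceptional case in which $\langle S_j\rangle$ is itself a hyperplane is $d=\dim\langle S\rangle=m$, i.e.\ $\langle S\rangle=EG(m,2)$ --- not $d=m-1$ as you claim; when $d=m-1$ the flat $\langle S_j\rangle$ is an $(m-2)$-flat, so your statement that ``$\langle S_j\rangle$ is itself a hyperplane containing $S_j$ but not $S$'' is false. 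Correspondingly, the ratio you propose to verify, $\frac{2^{m-d}-1}{2^{m-d-1}-1}>1$ for $d\le m-2$, is the projective ratio from Lemma~\ref{lemh1}; the Euclidean one obtained from (\ref{fg3}) with $\mu_2=m-1$ is $A(m-1,d-1)/A(m-1,d)=\frac{2^{m-d+1}-1}{2^{m-d}-1}>1$, valid for all $d\le m-1$. With your stated split, the case $d=m-1$ rests on a false claim and the case $d=m$ --- precisely the one where the ratio formula breaks down and a separate observation is needed --- is never treated.

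The repair is routine and brings you back to the paper's argument: for $d\le m-1$ the corrected count gives a hyperplane $P^*$ with $S_j\subseteq P^*$ and $S\not\subseteq P^*$, hence $P^*\cap S=S_j$ and $|P^*\cap S|=|S|-1$; for $d=m$ take $P^*=\langle S_j\rangle$ itself, which is a hyperplane with the same property (equivalently, one hyperplane contains the $(m-1)$-flat $\langle S_j\rangle$ while none contains $EG(m,2)$). In either case the complementation equivalence you correctly noted turns this into a weight-one row of $H^{(4)}(S)$, contradicting the stopping-set property. Your converse direction is fine as written.
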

\begin{proof}
By the definition of stopping sets, a non-empty subset $S\subseteq EG(m,2)$ is a
stopping set if and only if $H^{(4)}(S)$ has no rows with weight
one, i.e., $|P\cap S|\ne 1$ or $|\bar P\cap S|\ne |S|-1$ for any
hyperplane $P$ of $EG(m,2)$. Since $P$ is a hyperplane in
$EG(m,2)$ if and only if $\bar P$ is also a hyperplane, we only
need to show that $|P\cap S|\ne |S|-1$ for any hyperplane $P$ of
$EG(m,2)$ if and only if $\langle S\rangle=\langle S_j\rangle$ for
any $j\in S$. With the same arguments used in the proof of Lemma \ref{lemh1}, the lemma
follows.
\end{proof}

\begin{Remark}
\label{reh} It is easy to see from Lemma $\ref{lemeh1}$ that when
$u\ge 2^{m-1}+2$, any $u$-set of $EG(m,2)$ is a stopping set since
any set with at least $2^{m-1}+1$ points generates $EG(m,2)$.
\end{Remark}

Since $H^{(4)}$ is formed by all rows except the all-$1$ row
of $H^{(4)*}$, they have the same SSDs.

\begin{Theorem}
\label{theh2} Let $\hat\mathcal{H}(m)$ be the $[2^m,2^m-m-1,4]$
extended Hamming code with parity-check matrix $H^{(4)}$. Let
$\{T_i(H^{(4)})\}_{i=0}^n$ be the SSD of $\hat\mathcal{H}(m)$.
Then
\begin{eqnarray}
\label{theh2-1}
T_u (H^{(4)}) = \left\{ \begin{array}{ll} 1, &u=0,\\
0, &u=1,2,3,\\
\sum_{l=\lceil\log u\rceil}^{\min\{u-2,m\}}&\!\!\!\!N_{EG}(m,l)\;G_{EG}(u,l), \\
&u=4,\ldots,2^{m-1}+1,\\
{2^m\choose u}, & u=2^{m-1}+2,\ldots, 2^m,
\end{array} \right.
\end{eqnarray}
where $N_{EG}(m,l)$ and $G_{EG}(u,l)$ are defined in $(\ref{fg1})$
and $(\ref{geg})$ respectively.
\end{Theorem}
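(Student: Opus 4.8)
The plan is to mirror the proof of Theorem \ref{thh2} almost verbatim, since Lemma \ref{lemeh1} for $\hat{\mathcal H}(m)$ plays exactly the role that Lemma \ref{lemh1} played for $\mathcal H(m)$. First I would observe that $T_0(H^{(4)})=1$ trivially, and that by Lemma \ref{lemeh1} a non-empty set $S\subseteq EG(m,2)$ is a stopping set of $H^{(4)}$ precisely when $\langle S\rangle=\langle S_j\rangle$ for every $j\in S$; that is, $S$ is a stopping $u$-generator of the flat $\langle S\rangle$, where $u=|S|$. Summing over the possible dimensions $l$ of $\langle S\rangle$ and using that $EG(m,2)$ contains $N_{EG}(m,l)$ flats of dimension $l$, each carrying $G_{EG}(u,l)$ stopping $u$-generators (by the definition \eqref{g} and \eqref{geg}), I get the master identity
\begin{eqnarray*}
T_u(H^{(4)})=\sum_{l=0}^{m} N_{EG}(m,l)\,G_{EG}(u,l),\qquad u\ge 1.
\end{eqnarray*}

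Next I would prune this sum to the stated range of $l$. On the low end, any $u$-set in $EG(m,2)$ spans a flat of dimension at least $\lceil\log u\rceil$ (an $l$-flat has exactly $2^l$ points), so $B_{EG}(u,l)=G_{EG}(u,l)=0$ whenever $l<\lceil\log u\rceil$; on the high end, \eqref{g5} gives $G_{EG}(u,l)=0$ once $u\le l+1$, i.e. whenever $l>u-2$. Hence for $1\le u\le 2^m$,
\begin{eqnarray*}
T_u(H^{(4)})=\sum_{l=\lceil\log u\rceil}^{\min\{u-2,\,m\}} N_{EG}(m,l)\,G_{EG}(u,l).
\end{eqnarray*}
Specializing, $u=1,2,3$ give an empty index range (since $u-2<\lceil\log u\rceil$ in each case — note a $3$-set is always independent, hence never a stopping generator), so $T_1=T_2=T_3=0$, matching the minimum distance $d=4$. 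For $u\ge 2^{m-1}+2$, Remark \ref{reh} tells us every $u$-set is a stopping set, so $T_u(H^{(4)})=\binom{2^m}{u}$; this also follows from the master identity together with $\sum_l N_{EG}(m,l)B_{EG}(u,l)=\binom{2^m}{u}$ and the fact that in this range every $u$-generator of a proper flat is impossible, forcing $l=m$ and $G_{EG}(u,m)=B_{EG}(u,m)$. Assembling the three regimes yields \eqref{theh2-1}.

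There is essentially no serious obstacle here: the combinatorial heavy lifting has already been done in Lemmas \ref{lemeh1}, \ref{lemg}, and in \eqref{g5}. The one point that deserves a line of care is the boundary case $u=2^{m-1}+1$, where I must check that $\min\{u-2,m\}=m$ (true for $m\ge 2$, since $2^{m-1}-1\ge m$) and that the formula $\sum_{l} N_{EG}(m,l) G_{EG}(u,l)$ indeed still needs the full flat-by-flat sum rather than collapsing to a single binomial — i.e. that not every $(2^{m-1}+1)$-set spans $EG(m,2)$. That is correct: a hyperplane has $2^{m-1}$ points, so a $(2^{m-1}+1)$-set can fail to be a stopping set (e.g. all $2^{m-1}$ points of a hyperplane plus one point whose removal drops the span), which is why $2^{m-1}+2$, not $2^{m-1}+1$, is the threshold in Remark \ref{reh}. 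With that boundary bookkeeping in place the proof is complete.
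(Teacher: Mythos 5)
Your proposal is correct and follows essentially the same route as the paper: the master identity $T_u=\sum_{l}N_{EG}(m,l)\,G_{EG}(u,l)$ from Lemma \ref{lemeh1} and the definition (\ref{g}), pruning of the sum via the dimension bound $l\ge\lceil\log u\rceil$ and via (\ref{g5}), the vanishing of $T_1,T_2,T_3$, and Remark \ref{reh} for $u\ge 2^{m-1}+2$. The only slip is the parenthetical claim that $2^{m-1}-1\ge m$ for all $m\ge 2$ (it fails at $m=2$), but this is harmless because for $m=2$ the range $u=4,\ldots,2^{m-1}+1$ is empty, and that boundary check is not actually needed since the theorem's formula already caps the sum at $\min\{u-2,m\}$.
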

\begin{proof}
Clearly, $T_0=1$. By Lemma \ref{lemeh1} and the definition
(\ref{g}) of $G_{EG}(u,l)$, it is easy to see that
\begin{eqnarray}
\label{theh2-2} T_u  =\sum_{l=0}^{m} N_{EG}(m,l)\;G_{EG}(u,l).
\end{eqnarray}
Since any $u$-set in $EG(m,2)$ generates a flat with dimension at
least $\lceil \log u \rceil$,
\begin{eqnarray}
\label{theh2-3} B_{EG}(u,l)=G_{EG}(u,l)=0 \quad \mbox{if }l<\lceil \log u
\rceil.
\end{eqnarray}
Combining (\ref{theh2-2})-(\ref{theh2-3}) and (\ref{g5}), we have
that
\begin{eqnarray}
\label{theh2-4} T_u  =\sum_{l=\lceil\log
u\rceil}^{\min\{u-2,\;m\}} N_{EG}(m,l)\;G_{EG}(u,l), \quad 1\le u \le
2^m.
\end{eqnarray}
Let $u=1,2,3$, we have $T_1=T_2=T_3=0$. Combining these results and Remark
\ref{reh}, (\ref{theh2-1}) follows.
\end{proof}

\begin{Remark}
By Theorem {\rm \ref{theh2}}, we have that $$T_4(H^{(4)})=
2^{m-2}(2^m-1)(2^{m-1}-1)/3, \quad T_5(H^{(4)})=0.$$ It is easy to see
from {\rm \cite{ms}} that $A_4=T_4(H^{(4)})$ and $A_5=0$ for $\hat
\mathcal{H}(m)$, which were also obtained by Weber and
Abdel-Ghaffar {\rm \cite{wa}}.
\end{Remark}

\section{Conclusions}

Let $C$ be a binary $[n,k]$ linear code. Let $H^{*}$ be the parity-check matrix of $C$ which is
formed by all the non-zero codewords of its dual code $C^{\perp}$.
On the BEC, the iterative decoder with parity-check matrix $H^{*}$
achieves the best possible performance, but has the highest decoding complexity.
The stopping set distribution of $C$ with the parity-check matrix $H^{*}$
is used to determine the performance of $C$ under iterative decoding with the parity-check matrix $H^{*}$ over a BEC.
In general, it is difficult to determine the stopping set distribution $\{T_i(H^*)\}_{i=0}^n$ of $C$ with the parity-check matrix $H^{*}$.
Let $H$ be a parity-check matrix of $C$. Let $\{T_i(H)\}_{i=0}^n$ be the stopping set distribution of $C$ with the parity-check matrix $H$.
Since $H$ is a sub-matrix formed by some rows of $H^*$, any stopping set of $H^*$
is a stopping set of $H$. This implies that $T_i(H)\geq T_i(H^*)$ for every $0\leq i\leq n$.
A parity-check matrix $H$ is called BEC-optimal if
$T_i(H)=T_i(H^*)$ for every $0\leq i\leq n$ and $H$ has the smallest number of rows.
On the BEC, the iterative decoder with BEC-optimal parity-check matrix $H$ achieves the best possible performance
as the iterative decoder with parity-check matrix $H^{*}$ and it has lower decoding complexity than $H^*$.
In general, it is difficult to obtain BEC-optimal parity-check matrix for a general linear code.
It is interesting to construct BEC-optimal parity-check matrices and then determine the corresponding
stopping set distributions for LDPC codes and well known linear codes.
In this paper, we obtain BEC-optimal parity-check matrices and then determine the corresponding
stopping set distributions for the Simplex codes, the Hamming codes, the first order
Reed-Muller codes and the extended Hamming codes.


\begin{thebibliography}{99}
\baselineskip=18pt

\bibitem{aw}
K. A. S. Abdel-Ghaffar and J. H. Weber, ``Complete enumeration of
stopping sets of full-rank parity-check matrices of Hamming codes,"
\emph{IEEE Trans. Inform. Theory}, vol. 53, no. 9, pp. 3196-3201, 2007.

\bibitem{dptru} C. Di, D. Proietti, I. E. Telatar, T. J. Richardson and R.L. Urbanke,
``Finite-length analysis of low-density parity-check codes on the
binary erasure channel," \emph{IEEE Trans. Inform. Theory}, vol.
48, no. 6, pp. 1570-1579, 2002.

\bibitem{ea}
M. Esmaeili and M. J. Amoshahy, ``On the stopping distance of array code parity-check matrices,"
\emph{IEEE Trans. Inform. Theory}, vol. 55, no. 8, pp. 3488-3493, Aug. 2009.

\bibitem{e}
T. Etzion, ``On the stopping redundancy of Reed-Muller codes,"
\emph{IEEE Trans. Inform. Theory}, vol. 52, no. 11, pp. 4867-4879,
Sep. 2006.

\bibitem{f}
J. Feldman, \emph{ Decoding Error-Correcting Codes via Linear Programming}, Ph.D. Thesis,
Massachusetts Institute of Technology, Sep. 2003.

\bibitem{fwk}
J. Feldman, M. J. Wainwright, and D. R. Karger, ``Using linear
programming to decode binary linear codes," \emph{IEEE Trans.
Inform. Theory}, vol. 51, no. 3, pp. 954-972, 2005.

\bibitem{hs}
J. Han and P. H. Siegel, ``Improved upper bounds on stopping redundancy,"
\emph{IEEE Trans. Inform. Theory}, vol. 53, no. 1, pp. 901-104, Jan. 2007.

\bibitem{hsv}
J. Han, P. H. Siegel, and A. Vardy, ``Improved probabilistic bounds
on stopping redundancy," \emph{IEEE Trans. Inform. Theory}, vol. 54, no. 4, pp.
1749-1753, Apr. 2008.

\bibitem{hsr}
J. Han, P. H. Siegel, and R. M. Roth, ``Single-exclusion number and the stopping redundancy of MDS codes,"
\emph{IEEE Trans. Inform. Theory}, vol. 55, no. 9, pp. 4155-4166, Sep. 2009.

\bibitem{hmlh}
T. Hehn, O. Milenkovic, S. Laendner, and J. B. Huber,
``Permutation decoding and the stopping redundancy
hierarchy of cyclic and extended cyclic codes,"
\emph{IEEE Trans. Inform. Theory}, vol. 54, no. 12, pp. 5308-5331, Dec. 2008.

\bibitem{ht2}
H. Hollmann and L. Tolhuizen, ``Erasure correcting
sets: bounds and constructions," \emph{Journal of Combinatorial
Theory, Series A}, vol. 113, pp. 1746-1759, 2006.

\bibitem{ht1}
H. Hollmann and L. Tolhuizen, ``On parity-check collections for iterative
erasure decoding that correct all correctable erasure patterns of a
given size," \emph{IEEE Trans. Inform. Theory}, vol. 53, no. 2, pp. 823-828, Feb.
2007.

\bibitem{kv}
N. Kashyap and A. Vardy, ``Stopping sets in codes from designs," in
\emph{Proc. IEEE Int. Symp. Inform. Theory}, Yokohama, Japan, Jun./Jul. 2003, p. 122.

\bibitem{koetter}
R. Koetter and P. O. Vontobel, ``Graph covers and iterative
decoding of finite-length codes," \emph{Proc. $3$rd Int. Conf.
Turbo Codes and Related Topics}, Brest, France, Sep. 2003, pp. 75-82.

\bibitem{klf}
Y. Kou, S. Lin, and M. P. C. Fossorier, ``Low-density parity-check
codes based on finite geometries: A rediscovery and new results,"
\emph{ IEEE Trans. Inform. Theory}, vol. 47, no. 7, pp. 2711-2736,
2001.

\bibitem{ks}
K. M. Krishnan and P. Shankar, ``Computing the stopping distance of
a Tanner graph is NP-hard," \emph{IEEE Trans. Inform. Theory}, vol. 53, no. 6, pp.
2278-2280, Jun. 2007.

\bibitem{lm}
S. Laendner and O. Milenkovic, ``LDPC codes based on Latin squares: cycle structure, stopping set,
and trapping set analysis,"
\emph{IEEE Trans. Communications}, Vol. 55, No. 2, pp. 303-312, Feb. 2007.

\bibitem{ms}
F. J. MacWilliams and N. J. A. Sloane, \emph{ The Theory of
Error-Correcting Codes}. Amsterdam, The Netherlands:
North-Holland, 1981 (3rd printing).

\bibitem{m}
R. J. McEliece, ``Are there turbo-codes on Mars?" Shannon
Lecture, \emph{Proc. IEEE Int. Symp. Inform. Theory}, Chicago, IL,
USA, Jun./Jul. 2004. The slides are available at the web
site http://www.systems.caltech.edu/EE/Faculty/rjm/.

\bibitem{msw}
O. Milenkovic, E. Soljanin, and P. Whiting,
``Asymptotic spectra of trapping sets in regular and irregular LDPC code ensembles,"
\emph{IEEE Trans. Inform. Theory}, vol. 53, no. 1, pp. 39-55, 2007.

\bibitem{ovz}
A. Orlitsky, K. Viswanathan, and J. Zhang, ``Stopping set
distribution of LDPC code ensembles," \emph{IEEE Trans. Inform.
Theory}, vol. 51, no. 3, pp. 929-953, Mar. 2005.

\bibitem{r}
V. Rathi, ``On the asymptotic weight and stopping set distribution
of regular LDPC ensembles," \emph{IEEE Trans. Inform. Theory},
vol. 52, no. 9, pp. 4212-4218, Sep. 2006.

\bibitem{sv}
M. Schwartz and A. Vardy, ``On the stopping distance and the
stopping redundancy of codes," \emph{IEEE Trans. Inform. Theory},
vol. 52, no. 3, pp. 922-932, 2006.

\bibitem{txla05}
H. Tang, J. Xu, S. Lin, and K. A. S.
Abdel-Ghaffar, ``Codes on finite geometries," \emph{ IEEE Trans.
Inform. Theory}, vol. 51, no. 2, pp. 572-596, 2005.

\bibitem{tanner}
R. M. Tanner, ``A recursive approach to low complexity codes,"
\emph{IEEE Trans. Inform. Theory}, vol. 27, no. 5, pp. 533-547,
Sep. 1981.

\bibitem{w}
T. Wadayama, ``Average stopping set weight distributions of redundant random ensembles,"
\emph{IEEE Trans. Inform. Theory}, vol. 54, no. 11, pp. 4991-5004, Nov. 2008.

\bibitem{wa}
J. H. Weber and K. A. S. Abdel-Ghaffar, ``Stopping set analysis
for Hamming codes," \emph{Proc. 2005 IEEE Information Theory Workshop}, Rotorua, New Zealand, Aug./Sep. 2005, pp. 244-247.

\bibitem{wa2}
J. H. Weber and K. A. S. Abdel-Ghaffar, ``Results on parity-check matrices with
optimal stopping and/or dead-end set enumerators,"
\emph{IEEE Trans. Inform. Theory}, vol. 54, no. 3, pp. 1368-1374, 2008.

\bibitem{xf}
S.-T. Xia and F.-W. Fu, ``On the minimum pseudo-codewords of LDPC
codes," \emph{IEEE Communications Letters}, vol. 10, no. 5, pp.
363-365, May 2006.

\bibitem{xf2}
S.-T. Xia and F.-W. Fu, ``On the stopping distance of finite
geometry LDPC Codes," \emph{IEEE Communications Letters}, vol.
10, no.5, pp. 381-383, May 2006.

\bibitem{xf3}
S.-T. Xia and F.-W. Fu, ``Stopping set distributions of some linear codes,"
\emph{Proc. IEEE Inform. Theory Workshop}, Chengdu, China, Oct. 2006, pp. 47-51.

\bibitem{xf1}
S.-T. Xia and F.-W. Fu, ``Minimum pseudoweight and minimum pseudocodewords of LDPC codes,"
\emph{IEEE Trans. Inform. Theory}, vol. 54, no. 1, pp. 480-485, Jan. 2008.

\end{thebibliography}
\end{document}